\newtheorem{thm}{Theorem}
\newtheorem{lemma}[thm]{Lemma}
\newtheorem{proposition}[thm]{Proposition}
\theoremstyle{definition}
\newtheorem{definition}[thm]{Definition}
\newtheorem{example}[thm]{Example}
\newtheorem{Remark}[thm]{Remark}
\definecolor{myred}{RGB}{153, 0, 51}
\definecolor{mygreen}{RGB}{0, 102, 0}
\definecolor{myblue}{RGB}{0,51, 204}
\definecolor{myorange}{RGB}{255,102, 0}
\renewcommand{\epsilon}{\varepsilon}
\newcommand{\FH}[1]{\left\langle #1 \right\rangle_f }
\DeclareMathOperator{\first}{\it first}
\DeclareMathOperator{\alp}{\it alph}
\DeclareMathOperator{\rank}{\it r}
\DeclareMathOperator{\FB}{\it B_f}
\def\s{\textbf{s}}
\def\r{\textbf{r}}
\def\a{\textbf{a}}
\def\b{\textbf{b}}
\def\c{\textbf{c}}
\def\u{\textbf{u}}
\def\p{\textbf{p}}
\def\q{\textbf{q}}
\def\vv{\textbf{v}}
\def\w{\textbf{w}}
\def\t{\textbf{t}}
\begin{document}
	
\title{The Intersection of $3$-Maximal Submonids}

\author{Giuseppa Castiglione}

\address{Dipartimento di Matematica e Informatica, Universit{\`a} di Palermo, Italy}
\email{giuseppa.castiglione@unipa.it}

\author{\v St\v ep\'an Holub}
\address{Department of Algebra, Faculty of Mathematics and Physics, Charles University, Prague, Czech Republic}
\email{holub@karlin.mff.cuni.cz}

\keywords{3-maximal monoids, intersection, free graph}

\begin{abstract}
Very little is known about the structure of the intersection of two $k$-generated monoids of words, even for $k=3$. Here we investigate the case of $k$-maximal monoids, that is, monoids whose basis of cardinality $k$ cannot be non-trivially decomposed into at most $k$ words. We characterize the intersection in the  case of two $3$-maximal monoids.
\end{abstract}

\maketitle

\section{Introduction}
 In this paper, we investigate the intersection of three-generated monoids of words in a special case when these monoids are $3$-maximal. A  monoid of words is $k$-maximal if its generating set cannot be non-trivially decomposed into at most $k$ (shorter) words. Obviously, the intersection of two finitely generated monoids of words is regular. However, already in the case of free monoids generated by two words, the structure of the intersection can be quite complex as we recall in Theorem \ref{th:karhu}, see \cite{Kar84,Hol19}. While monoids of three words have been classified (see \cite{Harju2003} for a survey), there is no classification of their intersection. It is useful to note, and we shall use this fact in the paper, that the general question about the structure of the intersection of two $k$-generated monoids is in fact a question about maximal solvable systems of equations over $2k$ unknowns, where the left hand sides and right hand sides are formed from disjoint sets of $k$ unknowns respectively. This indicates why the question is so difficult for $k=3$, where we have to deal with six unknowns.

It turns out, however, that when the condition of being $k$-maximal is added, the problem simplifies considerably. In \cite{Castiglione_2019}, a kind of defect theorem is shown for $2$-maximal monoids, see Theorem \ref{th:int_2max} below. In case of $3$-maximal monoids, studied in this paper, we encounter a situation which rather resembles the general case of two two-generated monoids. In fact, there is a close similarity to the related problem of binary equality sets.
In \cite{Ehrenfeucht1983}, it was shown that the binary equality set is either generated by at most two words, or it is of the form $(uw^*v)^*$. While it was later shown in \cite{Holub2003} that the latter possibility never takes place for binary equality words, we show in this paper that the set of possibilities given in the previous sentence is the exact description of intersection of two $3$-maximal monoids. This setting therefore fits, from the point of its complexity, somewhere between binary equality words, and the intersection of free two-generated monoids.

\section{Preliminaries}
Let $\Sigma^*$ ($\Sigma^+=\Sigma^*\setminus\{\epsilon\}$ resp.) be the {\em free monoid} ({\em free semigroup} resp.) freely generated by a countable set $\Sigma$ which will be fixed throughout the paper. As usually, we shall call the set $\Sigma$ an \emph{alphabet}, and understand elements of $\Sigma^*$ (resp. $\Sigma^+$) as finite words (finite nonempty words resp.) over $\Sigma$ with the monoid operation of concatenation. Note however, that $\Sigma$, understood as the set of generators satisfying $\Sigma \subseteq \Sigma^*$, is the set of words of length one, rather than a set of letters.  

We say that a word $u$ is a {\em prefix} (res. {\em proper prefix}) of $w$ and we write $u\leq w$ (resp. $u < w$), if $w=uz$ for some $z \in \Sigma^*$ (resp. $z\in \Sigma^+$).  We say that $u$ is a suffix of $w$ if $w=zu$ for some $z \in \Sigma^*$. Two words $v$ and $w$ are {\em prefix comparable} iff either $v \leq w$ or $w \leq v$. A word $w$ is \emph{primitive} if $w=v^n$ implies $n=1$ and $w=v$, otherwise it is called \emph{a power}. 
If we consider pairs of words, we say that $(u, v) \in \Sigma^* \times \Sigma^*$ is a prefix (resp. proper prefix) ) of $(r,s) \in \Sigma^* \times \Sigma^*$, and we write $(u,v) \leq (r,s)$ (resp. $(u,v) < (r,s)$), if $u \leq r$ and $v \leq s$ (resp.  $u < r$ and $v < s$).

Given $u,v \in \Sigma^*$, by $u \wedge v$ we denote the longest common prefix of $u$ and $v$. Let $u\in \Sigma^*$, by $\first(u)$ we denote the first letter of $u$.

Given a subset $X$ of $\Sigma^*$, by $X^*$ we denote the submonoid of $\Sigma^*$ generated by $X$. 
Conversely, given a submonoid $M$ of $\Sigma^*$, there exists a unique minimal (w.r.t. the set inclusion) generating set $B(M)$ of $M$, called the \emph{basis} of $M$, namely     
\begin{equation}\label{def:mgs}
B(M) = (M \setminus \{\epsilon\}) \setminus (M \setminus \{\epsilon\})^2.
\end{equation}
That is, the basis of $M$ is the set of all nonempty words of $M$ that cannot be written as a concatenation of two nonempty words of $M$. 
For an arbitrary set $X \subseteq \Sigma^*$, we shall write $B(X)$ instead of $B(X^*)$. The cardinality of $B(X)$ is the \emph{rank} of $X$, denoted $\rank(X)$.

A submonoid $M$ of $\Sigma^*$ with the basis $B$ is said to be {\em free} if any word of $M$ can be {\em uniquely} expressed as a product of elements of $B$. The basis of a free monoid is called a {\em code}.

It is well-known (see ~\cite{Tilson}) that for any set $X \subseteq \Sigma^*$ there exists the smallest free submonoid $\FH X$ of $\Sigma^*$ containing $X$. It is called the {\em free hull} of $X$. The basis of $\FH X$ is called the {\em free basis} of $X$, denoted by $\FB(X)$.
The cardinality of $\FB(X)$ is called the {\em free rank} of $X$ and denoted by $r_f(X)$.

For $w \in \FH X$, let $\first_X(w)=b_1$ where $w=b_1b_2 \cdots b_n$, $b_i \in \FB(X)$, be the unique factorization of $w$ into elements of $\FB(X)$. The words $b_1, b_1b_2,\dots ,b_1b_2 \cdots b_n$ are called {\em $X$-prefixes} of $w$. We write $u <_Z w$ if $u$ is a $X$-prefix of $w$. Moreover, given $u,w \in \FH X$, by $u \wedge_X w$ we denote the {\em longest common $X$-prefix of $u$ and $w$}.

\begin{example}\label{ex:first}
  Let $X=\{abcac, bab, ab, cacabcacb, ca\}$. The free basis is $B=\{ab, b, ca, cac\}$, hence $r_f(X)=4$. For $u= ab \cdot ca\cdot ca\cdot b\cdot cac \cdot b$ and  $w=ab\cdot cac\cdot ca\cdot ca \cdot ab \in X^*$, we have $\first_X(u)=ab$, $u \wedge w= abcac$ and $u \wedge_X w=ab$. 
\end{example}{}

We have the following well-known lemma.

\begin{lemma}\label{lm:first}
	Let $X$ a finite set of $\Sigma^*$ and $B$ its free basis. Then for each $y \in B$ there exists $u \in X$ such that $\first_X(u)=y$.
\end{lemma}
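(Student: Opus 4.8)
The plan is to argue by contradiction, exploiting the defining minimality of the free hull: since $\FH X$ is the smallest free submonoid containing $X$, it is contained in every free submonoid of $\Sigma^*$ containing $X$. So if I can produce a \emph{free} submonoid $N$ with $X \subseteq N \subsetneq \FH X$, I reach a contradiction. Accordingly, I assume that some $y \in B$ satisfies $\first_X(u) \neq y$ for every $u \in X$, and I build such an $N$. The starting observation is that, under this assumption, in the $B$-factorization $u = b_1 \cdots b_n$ of any $u \in X$ we have $b_1 \neq y$, whereas copies of $y$ may occur only in later positions, each with a factor immediately to its left. The idea is to "glue" every occurrence of $y$ to its left neighbour.

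Concretely I set
\[
 C = \{\, b y^{j} : b \in B \setminus \{y\},\ j \geq 0 \,\},
\]
and take $N = C^{*}$. Scanning $b_1 \cdots b_n$ from left to right and opening a new block exactly at each factor different from $y$ partitions the factorization into blocks of the form $b y^{j}$; since $b_1 \neq y$ this is always possible, so $u \in C^{*}$, giving $X \subseteq C^{*}$. As $C \subseteq B^{*}$, we also have $N = C^{*} \subseteq B^{*} = \FH X$, so $X \subseteq N \subseteq \FH X$.

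The step I expect to require the most care is verifying that $C$ is a code, so that $N = C^{*}$ is genuinely free. Here I would use that $B$ is a code: every word has a unique $B$-factorization, and for a product of elements of $C$ this factorization has the shape $b^{(1)} y^{j_1} b^{(2)} y^{j_2} \cdots$ with each $b^{(t)} \neq y$. Thus the non-$y$ factors of the $B$-factorization are exactly the left ends of the $C$-blocks, so the $C$-block boundaries are uniquely recoverable from the (unique) $B$-factorization; hence two $C$-factorizations of the same word coincide, and $N$ is free. The same description also shows that $C^{*}$ consists precisely of those words of $B^{*}$ whose $B$-factorization does not begin with $y$.

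Finally the inclusion $N \subsetneq \FH X$ is strict, and immediately so: $y \in B \subseteq \FH X$, yet the $B$-factorization of $y$ is the single factor $y$, so $y \notin C^{*} = N$. This contradicts the minimality of the free hull and proves the claim. (The degenerate case $B = \{y\}$ needs no separate treatment: then every nonempty word of $X$ already has $\first_X = y$, so the assumption cannot hold unless $X \subseteq \{\epsilon\}$, where the statement is vacuous.)
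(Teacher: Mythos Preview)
The paper does not actually supply a proof of this lemma: it is stated as a ``well-known lemma'' and then immediately used to relate the free rank of $X$ to the number of connected components of the free graph $G(X)$. So there is no in-paper argument to compare against.

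Your proof is correct. The construction $C = \{\, b y^{j} : b \in B \setminus \{y\},\ j \geq 0 \,\}$ and $N = C^{*}$ is exactly the standard way this lemma is established (it is essentially the argument behind the Defect Theorem, see e.g.\ Lothaire or \cite{BPPR79}). The three verifications are all sound: (i) the grouping of the $B$-factorization of each $u \in X$ into $C$-blocks works because $b_1 \neq y$; (ii) $C$ is a code because, viewing $B$ as an alphabet via the unique $B$-factorization, $C$ corresponds to the prefix code $(B \setminus \{y\})\{y\}^{*}$, and compositions of codes are codes; (iii) strictness follows from $y \in \FH X \setminus N$. The parenthetical about $B = \{y\}$ is slightly off---if $X \subseteq \{\epsilon\}$ then $B = \emptyset$, not $\{y\}$---but this does not affect the argument, since $B = \{y\}$ genuinely forces every nonempty $u \in X$ to have $\first_X(u) = y$.
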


In order to see the importance of the above lemma, 
let us define the {\em free graph} of a finite set $X\subset \Sigma^+$ as the undirected graph $G(X)=(X,E_X)$ without loops where $E_X=\{[u,w] \in X \times X \mid u \neq w\  \mbox{and}\  \  \first_X(u)=\first_X(w)\}$.

Let $c(X)$ be the number of connected components of $G(X)$. By Lemma \ref{lm:first}, we now have that
$$r_f(X)=c(X),$$
which immediately implies the Defect Theorem claiming that $r_f(X) < \vert X \vert$ if $X$ is not a code (cf. \cite{Lothaire1} and \cite{BPPR79}).

\begin{example}
	 Consider $X$ of the Example \ref{ex:first}. The free graph $G(X)$ has a unique edge $[abcc, ab]$ connecting the only two words starting with $ab \in B$. Note that there is no edge between $ca$ and $cacabcacb$, since $\first_X(ca) = ca \neq cac = \first_X(cacabcacb)$.
	\end{example}

The free graph is a frequently used tool in the paper because it allows us to easily establish the free rank of a set by considering the properties of the edges of the associated free graph.

\section{$k$-maximal Monoids}

In this section we study $k$-maximal submonioids introduced in \cite{Castiglione_2019}. With $\mathcal{M}_k$ we denote the family of submonoids $M$ of $\Sigma^*$ of rank at most $k$.

\begin{definition} (cf. \cite{Castiglione_2019} )
	A submonoid $M \in \mathcal{M}_k$ is \emph{$k$-maximal} if for every $M^\prime \in \mathcal{M}_k$,  $M \subseteq M^\prime$ implies $M=M^\prime$.
\end{definition}

In other words, the elements of the basis of $M$ cannot be nontrivially factored into at most $k$ words.

\begin{example}\label{ex:ch3first}
	For every word $v\in \Sigma^+$, the submonoid $\{v\}^*$ (denoted simply by $v^*$) is $1$-maximal if and only if $v$ is a primitive word. 
	
	The submonoid $\{a,cbd,dbd\}^*$ is $3$-maximal, whereas $\{a,cbd, dcbd\}^*$ is not $3$-maximal since it is contained in $\{a,cb,d\}^*.$
	\end{example}

Let $|X|=\left|\alp(X)\right|=k$, where $\alp(X)$ is the subset of letters of $\Sigma$ occurring in the words of $X$. Then $X^*$ is $k$-maximal if and only if $X^*=\alp(X)^*$. Also, a $k$-maximal submonoid is obviously generated by primitive words. On the other hand, a finite set of $k$ primitive words does not necessarily generate a $k$-maximal submonoid of $\Sigma^*$ as we can see in Example \ref{ex:ch3first}. 
\medskip	

In \cite{Castiglione_2019}, it is proved that the basis of a $k$-maximal submonoid of $\Sigma^*$ is a bifix code, in particular its free rank is $k$.  We repeat the proof here.
\begin{proposition}\label{prop:bifix}
Let $X$ be the basis of a  $k$-maximal submonoid. Then, $X$ is a bifix code.
\end{proposition}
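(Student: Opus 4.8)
The plan is to prove that $X$ is simultaneously a prefix code and a suffix code; since a prefix code is automatically a code, this immediately yields that $X$ is a bifix code. Both halves follow from the same construction, which exploits $k$-maximality by exhibiting a monoid of rank at most $k$ that would strictly contain $M = X^*$ unless the desired combinatorial property already holds.

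For prefix-freeness I would argue by contradiction. Suppose some $u, v \in X$ satisfy $u < v$, so that $v = uz$ with $z \in \Sigma^+$ and $u \neq v$. Set $Y = (X \setminus \{v\}) \cup \{z\}$. Since $|Y| \le |X| \le k$, we have $Y^* \in \mathcal{M}_k$. Moreover every element of $X$ lies in $Y^*$: the elements of $X \setminus \{v\}$ belong to $Y$, and $v = uz$ with $u \in X \setminus \{v\} \subseteq Y^*$ and $z \in Y$, whence $v \in Y^*$. Therefore $M = X^* \subseteq Y^*$, and $k$-maximality forces $M = Y^*$. In particular $z \in Y^* = M$, so $v = uz$ is a product of two nonempty words of $M$; this contradicts $v \in B(M)$, because by \eqref{def:mgs} no element of the basis factors into two nonempty words of $M$. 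Hence $X$ is prefix-free, and consequently a code.

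The argument for suffix-freeness is identical up to the order of concatenation: if $u$ were a proper suffix of $v$, writing $v = zu$ with $z \in \Sigma^+$ and taking the same $Y = (X \setminus \{v\}) \cup \{z\}$ gives $M \subseteq Y^*$, and the equality $M = Y^*$ again places $z$ in $M$ and exhibits $v = zu$ as a nontrivial factorization of a basis element, a contradiction. Combining the two conclusions, $X$ is a bifix code.

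I expect the only delicate point to be verifying cleanly that the replacement set $Y$ has rank at most $k$ and that $M \subseteq Y^*$, together with the bookkeeping that the factorization $v = uz$ (resp. $v = zu$) is genuinely nontrivial, i.e. that $u \neq v$ guarantees $z$ nonempty and $u \in X \setminus \{v\}$. Everything else is a direct appeal to the definition of the basis and to $k$-maximality.
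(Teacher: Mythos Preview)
Your proof is correct and follows essentially the same approach as the paper: replace the longer basis word by the leftover piece to obtain a set $Y$ of cardinality at most $k$ with $X^* \subseteq Y^*$, and then use $k$-maximality to reach a contradiction via a nontrivial factorization of a basis element. The paper's proof is simply a terser version of your argument (with the notation $uv, u \in X$ in place of your $v = uz$).
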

\begin{proof}
 If $uv, u \in X$ then $X^* \subseteq Y^*$ where $Y$ is obtained from $X$ by replacing $uv$ with $v$. Hence $X^*$ is not $k$-maximal, since $v\notin X$. Similarly for suffixes. 
\end{proof}{}

The inverse is not true, see again Example \ref{ex:ch3first} where $\{a,cbd, dcbd\}$ is a bifix code.

Submonoids generated by two words, i.e., elements of $\mathcal{M}_2$, have been extensively studied in the literature (cf.~\cite{LenSchut,Kar84,Neraud1993,lerest}) and play an important role in many fundamental aspects of combinatorics on words.
It is known (see \cite{Kar84} and \cite{Hol19}) that if $X$ and $U$ have free rank $2$, then the intersection $X^* \cap U^*$ is a free monoid generated either by at most two words or by an infinite set of words. More formally, we have the following theorem.

\begin{thm}\label{th:karhu}
Let $X=\{x,y\}$ and $U=\{u,v\}$ be two sets of $\Sigma^*$ with free rank 2, then $X^* \cap U^*$ is one of the forms 
\begin{itemize}
    \item $X^* \cap U^*=\{\gamma, \beta\}^*$, for some $\gamma, \beta \in \Sigma^*$;
    \item $X^* \cap U^*=(\beta_0+\beta(\gamma(1+\delta+\cdots+\delta^t))^*\tau)^*$, for some $\beta_0, \beta, \gamma, \delta, \tau \in \Sigma^*$ and some $t \in N.$
\end{itemize}
     \end{thm}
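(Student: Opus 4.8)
The plan is to exploit the two factorizations that every word of $M=X^*\cap U^*$ carries: its unique factorization over the code $X$ and its unique factorization over the code $U$ (both are codes by freeness of $X^*$ and $U^*$). First I would dispose of the degenerate cases. If $X^*=U^*$ then $M=X^*$, which has the first form with $\{\gamma,\beta\}=\{x,y\}$; if $M=\{\epsilon\}$ it is the empty-generated monoid, again of the first form. So from now on assume $M$ is nontrivial and $X^*\neq U^*$. Note also that $M$ is free, being an intersection of free (equivalently, stable) submonoids of $\Sigma^*$, so it makes sense to speak of its basis.

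The central device is a finite \emph{overhang automaton} $\mathcal{A}$. Scanning a common word $w$ from left to right, compare the positions of the $X$-cuts and the $U$-cuts. At each moment one factorization overshoots the other by a word $d$ (the \emph{overhang}), together with a flag recording which side is ahead; the synchronized state $q_0$ (overhang $\epsilon$) is exactly where the scanned prefix already lies in $M$. A transition appends the next factor on the side that is behind; since both bases are codes, that factor must be prefix-comparable with $d$, and the overhang is then updated (possibly resynchronizing to $q_0$). The key finiteness observation is that whenever a side \emph{newly} becomes ahead, the resulting overhang is strictly shorter than the factor just appended, so every reachable overhang satisfies $|d|<\max(|x|,|y|,|u|,|v|)$. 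Hence $\mathcal{A}$ has finitely many states.

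With $\mathcal{A}$ in hand, $M$ is precisely the set of labels of closed walks at $q_0$, and the basis $B(M)$ is the set of labels of \emph{first returns} to $q_0$, i.e.\ walks from $q_0$ to $q_0$ with no intermediate visit to $q_0$. Thus everything reduces to describing the first-return language of a single state in this very constrained finite automaton, and I would prove the dichotomy by analysing the strongly connected component $C$ of $q_0$. If $C$ contains no cycle avoiding $q_0$, there are finitely many first returns, and using that $X$ and $U$ have rank exactly $2$ one shows there are at most two of them, giving $\{\gamma,\beta\}^*$. Otherwise there is an internal cycle; here I would argue, via Fine--Wilf periodicity forced by prefix-comparability with short overhangs and by the freeness of both bases, that such a loop is essentially unique, is based at a single branching state, and reads a primitive-related word. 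Reading off the labels of the entry segment, the loop, and the exit segment yields the parameters $\beta,\gamma,\delta,\tau$ together with the short exceptional return $\beta_0$, and the periodicity bound caps the number of consecutive loop traversals at some $t$, producing exactly $\bigl(\beta_0+\beta(\gamma(1+\delta+\cdots+\delta^t))^*\tau\bigr)^*$.

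The hard part is this last step: ruling out anything more complicated than a single bounded inner loop. A priori the first-return language of a finite automaton can be an arbitrary regular set, so the real work is to use the strong structural restrictions — overhangs bounded in length, transitions governed by prefix-comparability, and both generating sets being two-element codes — to exclude two independent loops and to explain why the inner loop's exponent is capped by a constant $t$ rather than being free. This periodicity analysis is precisely what separates the genuinely infinite intersections from the finitely generated ones, and I expect it to be the main obstacle.
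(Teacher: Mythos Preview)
The paper does not prove this theorem at all: it is quoted as a known result from Karhum\"aki~\cite{Kar84} and the refinement in~\cite{Hol19}, and no argument is given in the text. So there is nothing in the paper to compare your proposal against.

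As for the proposal itself, the overhang-automaton framework you set up is exactly the machinery used in the original proofs, and your reduction of the basis of $M$ to the first-return language at the synchronized state is correct. However, what you have written is an outline, not a proof: you explicitly flag the crucial step---showing that the first-return language, a priori an arbitrary regular set, collapses to the very special shape $\beta_0 + \beta(\gamma(1+\delta+\cdots+\delta^t))^*\tau$---and then do not carry it out. That step is the entire content of the theorem; the automaton construction is standard. In Karhum\"aki's argument the work lies in a careful case analysis of how the two binary codes can overlap (using Fine--Wilf and the Critical Factorization Theorem), and in~\cite{Hol19} the structure is recovered via a detailed study of binary equality-like equations. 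Your appeal to ``periodicity forced by prefix-comparability with short overhangs'' gestures at the right tools but does not yet explain why two independent internal loops cannot coexist, nor where the bound $t$ comes from. Until that analysis is actually performed, the proposal remains a correct plan rather than a proof.
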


\begin{example}\label{ex:int}
	Let $X_1=\{abca,bc\}$ and $U_1=\{a, bcabc\}$. One can verify that $X_1^* \cap U_1^*=\{abcabc, bcabca\}^*.$ Let $X_2=\{aab, aba\}$ and $U_2=\{a, baaba\}$, then $X_2^* \cap U_2^*=(a(abaaba)^*baaba)^*.$ Note that the submonoids here considered are not $2$-maximal. Indeed, $X_1^*, U_1^* \subseteq \{a, bc\}^*$ and $X_2^*, U_2^* \subseteq \{a, b\}^*$.   
\end{example}

To our knowledge nothing is proved in general for the intersection of two monoids of free rank $3$. In \cite{Kar85}, some properties of codes with three elements are studied.

Let us turn our attention to the intersection of $k$-maximal submonoids. For the intersection of $1$-maximals, that is, for the submonoids in $\mathcal{M}_1$, we have the following important property: If $x^*$ and $u^*$ are $1$-maximal submonoids (i.e., $x$ and $u$ are primitive words) then $x^* \cap u^* = \{\epsilon\}.$ 
A generalization of this result, given in \cite{Castiglione_2019}, to the case of $2$-maximal submonoids is the following.

\begin{thm}\label{th:int_2max}
Let $X=\{x,y\}$ and $U=\{u,v\}$, with $X \neq U$, be such that $X^*$ and $U^*$ are $2$-maximal submonoids of $\Sigma^*$. If $X^* \cap U^* \neq \{\epsilon\}$, then there exists a unique primitive word $z\in \Sigma^+$ such that $X^*\cap U^*=z^*$.
\end{thm}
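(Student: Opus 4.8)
The plan is to analyze the free hull $\FH{X \cup U}$ and to split the argument according to its free rank $r := r_f(X \cup U) = c(X \cup U) \in \{2,3,4\}$, using the identity $r_f = c$ supplied by the free graph. Since $X^* \cup U^* \subseteq \langle X \cup U\rangle \subseteq \FH{X\cup U}$, the two extreme values fall quickly. If $r = 4$, then $X \cup U$ is a code with four distinct elements, so $X \cap U = \emptyset$; for any $w \in X^* \cap U^*$ the $X$-factorization and the $U$-factorization of $w$ are both factorizations over the code $X \cup U$, hence coincide by uniqueness, forcing the first factor to lie in $X \cap U = \emptyset$, so $w = \epsilon$, contradicting $X^* \cap U^* \neq \{\epsilon\}$. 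If $r = 2$, then $\FH{X \cup U} \in \mathcal{M}_2$ contains $X^*$, so $2$-maximality gives $X^* = \FH{X\cup U}$, and symmetrically $U^* = \FH{X\cup U}$; hence $X^* = U^*$ and $X = U$, contrary to hypothesis. This leaves $r = 3$.

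For $r = 3$ I would split on $\lvert X \cup U\rvert$. If $\lvert X \cup U\rvert = 3$, then $G(X\cup U)$ has no edge, so $X \cup U$ is a code; writing $X = \{s, x'\}$ and $U = \{s, u'\}$ for the shared word $s$, unique factorization over $\{s,x',u'\}$ shows that every $w \in X^* \cap U^*$ can use only the common factor $s$, whence $X^* \cap U^* = s^*$. As the basis of a $k$-maximal submonoid consists of primitive words, $s$ is primitive and this subcase is complete. This isolates the genuine case $\lvert X \cup U\rvert = 4$ with $r_f(X\cup U) = 3$, where $G(X \cup U)$ is a forest with exactly one edge.

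The hard part will be this last case. The single edge records the one coincidence $\first_{X\cup U}(p) = \first_{X\cup U}(q)$ among the four words, i.e.\ a single relation among the elements of $\FB(X\cup U)$, and I would run a case analysis on its endpoints: both in $X$, both in $U$, or one from each. In each case the coincidence of leading free-basis factors, combined with the bifix property of $X$ and $U$ (Proposition \ref{prop:bifix}), should pin down a common period and let one prove that any two nonempty words of $X^* \cap U^*$ commute; pairwise commutation then places all nonempty elements among the powers of the primitive root $z$ of the shortest one, so $X^* \cap U^* \subseteq z^*$, and a free submonoid of $z^*$ is cyclic. One may alternatively feed this case into Theorem \ref{th:karhu} and argue that both the rank-$2$ form (which would yield a rank-$2$ over-monoid forcing $X^* = U^*$ as above) and the infinite form are incompatible with the single-relation structure imposed by $r_f(X\cup U)=3$. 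I expect the delicate bookkeeping of the $X$- and $U$-factorizations across the one free-graph edge, together with verifying that the resulting generator is primitive rather than a proper power, to be the main obstacle; uniqueness is then immediate, since a cyclic monoid has a unique primitive generator.
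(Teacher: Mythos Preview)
The paper does not prove Theorem~\ref{th:int_2max}; it is quoted from \cite{Castiglione_2019} and only illustrated by Example~\ref{ex:int2max}. So there is no in-paper proof to compare against, and I can only assess your outline on its own merits.

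Your case split on $r_f(X\cup U)$ is natural, and the cases $r\in\{2,4\}$ and $(r,|X\cup U|)=(3,3)$ are correct. In the remaining case $(3,4)$ you overlook a simplification: if the unique edge of $G(X\cup U)$ is \emph{horizontal} (both endpoints in $X$, say), then for any nonempty $w\in X^*\cap U^*$ the element $\first_Z(w)$ would have to equal $\first_Z(x)=\first_Z(y)$ from the $X$-factorization and one of $\first_Z(u),\first_Z(v)$ from the $U$-factorization; these lie in distinct components, so this subcase is vacuous under the hypothesis $X^*\cap U^*\neq\{\epsilon\}$. The only substantive subcase is thus a single \emph{vertical} edge, say $[x,u]$. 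There both morphisms are $Z$-marked, and an argument parallel to the paper's Proposition~\ref{prop:cases} and Lemma~\ref{lm:criticaloverflow} (with a two-letter alphabet) shows that two distinct minimal solutions would force a second edge in $G(X\cup U)$; hence there is at most one minimal solution and $X^*\cap U^*=w^*$ for a single $w$. Your remaining worry about primitivity of $w$ is legitimate and still needs an argument.

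Your proposed alternative via Theorem~\ref{th:karhu} contains a genuine error: if $X^*\cap U^*=\{\gamma,\beta\}^*$, this is a \emph{submonoid} of $X^*$ and of $U^*$, not an over-monoid, so $2$-maximality gives no implication of the form $X^*=\{\gamma,\beta\}^*$. That route does not close the case as written.
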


\begin{example}\label{ex:int2max}
Let $X^*=\{abcab,cb\}$ and $U^*=\{abc,bcb\}^*$ be two $2$-maximal submonoids of $\Sigma^*$, then their intersection is  $\{abcabcbcb\}^*$. 
\end{example}

The following example shows that Theorem \ref{th:int_2max} can not be generalized to any $k>2$.

\begin{example}
	For $k=4$  let $X=\{a, b, cd, ce \}$ and $U=\{ac, bc, da, ea\}$. It is easy to see that $X^*$ and $U^*$ are $4$-maximal and $X^* \cap U^*= \{acda, acea, bcda, bcea\}^*$. For $k=5$ the intersection can be generated by $6$ elements, see for example $X=\{a, b, cd, ce, cf\}$ and $U=\{ac, bc, da, ea, fa\}$ are two $5$-maximal submonoids and $X^* \cap U^*=\{acda, acea, acfa, bcda, bcea, bcfa\}^*$. Similar examples are easily found for $k>5$.
\end{example}

In the following section, we characterize the intersection of two $3$-maximal submonoids.

\section{The Intersection of Two $3$-maximal Submonoids}

In what follows,  $X=\{x, y, z\}$ and $U=\{u, v, w\}$ will be two \emph{distinct} three-element subsets of $\Sigma^+$ such that $X^*$ and $U^*$ are $3$-maximal.
Let also $Z=X \cup U$.

We have the following lemma.

\begin{lemma}\label{lm:freegraphZ}
The free rank of $Z$, that is, the number of connected components of $G(Z)$, is more than three. Formally,  $3 < r_f(Z) = c(Z)$.
\end{lemma}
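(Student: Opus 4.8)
The plan is to argue by contradiction, reducing everything to the definition of $3$-maximality. Since the identity $r_f(Z) = c(Z)$ is already the established relation obtained from Lemma \ref{lm:first} applied to the set $Z$ (exactly as in the displayed formula preceding the Defect Theorem), it suffices to prove the strict inequality $r_f(Z) > 3$. First I would record the basic membership facts. Because $X^*$ and $U^*$ are $3$-maximal, Proposition \ref{prop:bifix} tells us that $X$ and $U$ are bifix codes, hence codes; so each is its own free basis and the monoids $X^*$ and $U^*$ both have rank $3$. In particular $X^*, U^* \in \mathcal{M}_3$.

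Next I would assume for contradiction that $r_f(Z) \le 3$. Then the free hull $\FH Z$ has basis $\FB(Z)$ of cardinality $r_f(Z) \le 3$, so $\FH Z$ is itself a submonoid of rank at most $3$, i.e. $\FH Z \in \mathcal{M}_3$. This is the crucial observation: it converts the assumption about free rank into a statement about membership in $\mathcal{M}_3$, which is precisely the class in which $3$-maximality is tested.

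Now I would exploit maximality. Since $X \subseteq Z \subseteq \FH Z$ and $\FH Z$ is a monoid, we have $X^* \subseteq \FH Z$. As $X^*$ is $3$-maximal and $\FH Z \in \mathcal{M}_3$ contains it, the definition of $3$-maximality forces $X^* = \FH Z$. The identical argument with $U$ in place of $X$ gives $U^* = \FH Z$. Hence $X^* = U^* = \FH Z$, and by uniqueness of the basis, $X = B(X^*) = B(U^*) = U$, contradicting the standing assumption $X \neq U$. Therefore $r_f(Z) \ge 4$, which is the claim.

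I do not expect a serious obstacle here; the whole argument turns on the single clean idea that, under the hypothesis $r_f(Z) \le 3$, the free hull $\FH Z$ lies in $\mathcal{M}_3$ and dominates both $X^*$ and $U^*$, so $3$-maximality collapses both to $\FH Z$. The only points requiring care are the bookkeeping that $r_f(Z) = |\FB(Z)|$ is exactly the rank of the monoid $\FH Z$ (so that the membership $\FH Z \in \mathcal{M}_3$ is legitimate), and the elementary inclusion $X \subseteq Z \implies X^* \subseteq \FH Z$, which is immediate from $Z \subseteq \FH Z$.
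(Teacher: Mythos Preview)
Your argument is correct and follows essentially the same route as the paper: assume $r_f(Z)\le 3$, use that both $X^*$ and $U^*$ sit inside a monoid of rank at most $3$ containing $Z$, and invoke $3$-maximality to force $X^*=U^*$, contradicting $X\neq U$. The only cosmetic difference is that the paper phrases the containing monoid as $Z^*$ (implicitly relying on $Z^*\subseteq\FH Z$) while you work directly with $\FH Z$; your formulation is arguably cleaner, but the idea is identical.
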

\begin{proof}
	If $r_f(Z) \leq 3$ then the inclusions $U^*\subseteq Z^*$ and $X^*\subseteq Z^*$ imply that $X^*$ and $U^*$ are not $3$-maximal unless $X^*=U^*=Z^*$ which is excluded by the hypothesis that $X$ and $U$ are distinct. 
\end{proof}

When we search for the elements of $X^* \cap U^*$ we are searching for those words that can be decomposed both into words $x$, $y$ and $z$, and into words $u$, $v$, $w$. Consider, as  an example, the sets $X=\{abbc, da, db\}$ and $U=\{abbca, b, cdad\}$. Then $abbcabbcdadb$ is such a word as can be seen from its factorizations $abbc\cdot abbc\cdot da\cdot db=abbca\cdot b\cdot b\cdot cdad\cdot b$. 

Double factorizations of this kind are best dealt with using two ternary morphisms as follows. 
We set $A= \{\a, \b, \c \}$  and define morphisms $g,h: A^* \rightarrow \Sigma^*$ by 
$$\begin{array}{ll}
g(\a)=x   & h(\a)=u \\
g(\b)=y    & h(\b)=v \\
g(\c)=z   & h(\c)=w. \\
\end{array}
$$
For better readability, we use the boldface style for elements of $A^*$. The example above is then captured by the equality $g(\a\a\b\c)=h(\a\b\b\c\b)=abbcabbcdadb$. That is, the word $abbcabbcdadb$ has the structure $\a\a\b\c$ if considered in $X^*$ and $\a\b\b\c\b$ if considered in $U^*$. 

We say that a morphism $g: A^* \rightarrow \Sigma^*$ is {\em marked} if for each pair of letters $\a_1 \neq \a_2 \in A$ we have $\first(g(\a_1)) \neq \first(g(\a_2)).$
Furthermore, if $X$ is a finite set of $\Sigma^*$ and $B$ its free basis we say that the morphism $g$ is {\em $X$-marked} if for each pair of letters $\a_1 \neq \a_2 \in A$ we have $\first_X(g(\a_1)) \neq \first_X(g(\a_2)).$ 

\medskip

Let $g,h: A^* \rightarrow \Sigma^*$ be two morphisms. The {\em coincidence set} of $g$ and $h$ is the set defined as follows
{$$C(g,h)=\{(\r, \s) \in A^+ \times A^+ \vert \,\,g(\r)=h(\s)\}.$$}

The pairs of the coincidence set are called {\em solutions}. 
A solution is \emph{minimal} if it cannot be written as the concatenation of other solutions. That is, if $(\u, \vv) < (\r, \s)$, then $(\u, \vv)$ is not a solution. Clearly, $C(g,h)$ is freely generated by the set of minimal solutions.

The property of $k$-maximality guarantees (by Proposition \ref{prop:bifix}) 
the following lemmas that are responsible for a relatively simple structure of the intersection. In particular, complications related to the second case of Theorem \ref{th:karhu} are avoided.

\begin{lemma}\label{lm:incmorphism}
  $h(\u) \leq h(\u')$ iff $\u \leq \u'$.
\end{lemma}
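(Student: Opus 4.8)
The plan is to prove the nontrivial (``only if'') implication by induction on the length of $\u$, exploiting that $3$-maximality forces $U$ to be a prefix code. The ``if'' direction is immediate: if $\u \le \u'$ then $\u' = \u\t$ for some $\t \in A^*$, whence $h(\u') = h(\u)h(\t)$ and so $h(\u) \le h(\u')$.

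For the converse I would assume $h(\u) \le h(\u')$ and argue by induction on $|\u|$. The base case $\u = \epsilon$ is trivial. In the inductive step write $\u = \a_1\u_1$ with $\a_1 \in A$. Since $U \subseteq \Sigma^+$, the word $h(\u)$ is nonempty, so $h(\u') \neq \epsilon$ and we may write $\u' = \a_1'\u_1'$ with $\a_1' \in A$. Both $h(\a_1)$ and $h(\a_1')$ are then prefixes of $h(\u')$ (the former because $h(\a_1) \le h(\u) \le h(\u')$), hence prefix comparable. By Proposition \ref{prop:bifix}, $U$ is a bifix code, in particular a prefix code, so two prefix-comparable elements of $U$ must coincide: $h(\a_1) = h(\a_1')$. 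Because $|U| = 3$, the restriction of $h$ to $A$ is injective, giving $\a_1 = \a_1'$. Cancelling the common prefix $h(\a_1)$ in $h(\u) \le h(\u')$ leaves $h(\u_1) \le h(\u_1')$, and the induction hypothesis yields $\u_1 \le \u_1'$; therefore $\u = \a_1\u_1 \le \a_1'\u_1' = \u'$.

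The crux of the argument, and the only place where the hypothesis is really used, is the step forcing the leading letters to agree: it is precisely the prefix-code property supplied by Proposition \ref{prop:bifix} that prevents one codeword from being a proper prefix of another, thereby ruling out any ``misalignment'' of the two factorizations at the front. I expect this to be the main (indeed only) obstacle; once it is in place, everything else reduces to routine left cancellation in the free monoid $\Sigma^*$ together with the injectivity of $h$ on $A$.
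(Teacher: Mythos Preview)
Your proof is correct and rests on the same idea as the paper's: the prefix-code property of $U$ (Proposition~\ref{prop:bifix}) forces the leading letters to match. The only difference is cosmetic---you run an explicit induction on $|\u|$, while the paper argues by contradiction at the first point where $\u$ and $\u'$ diverge---but the substance is identical.
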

\begin{proof}
If $\u < \u'$ trivially $h(\u) \leq h(\u')$. Viceversa, let $h(\u) \leq h(\u')$, if there exist $\a \neq \a' \in A$ such that $\u=\p\a\u_1$ and $\u'= \p\a'\u_1'$. Then $h(\a\u_1) < h(\a'\u_1')$ which implies that $h(\a)$ and $h(\a')$ are prefix comparable, a contradiction with Proposition \ref{prop:bifix}.
\end{proof}

\begin{lemma}\label{lm:notcompa}
	Let $(\r, \s)$ and $(\r', \s')$ be two distinct minimal solutions. Then $\r$ and $\r'$ are not prefix comparable, and $\s$ and $\s'$ are not prefix comparable.
\end{lemma}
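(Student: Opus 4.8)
The plan is to argue by contradiction, exploiting the fact that both $g$ and $h$ are injective and prefix-preserving. Lemma~\ref{lm:incmorphism} states this for $h$, and since $X^*$ is also $3$-maximal, Proposition~\ref{prop:bifix} applies to its basis as well, so the same proof yields the analogous statement for $g$: one has $g(\u) \leq g(\u')$ iff $\u \leq \u'$. In particular both morphisms are injective.

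Suppose first that $\r$ and $\r'$ are prefix comparable; by symmetry in the two solutions assume $\r \leq \r'$ and write $\r' = \r\t$. If $\r = \r'$ then $g(\r) = g(\r')$, hence $h(\s) = h(\s')$, and injectivity of $h$ forces $\s = \s'$, contradicting the assumption that the two solutions are distinct. Therefore $\t \in A^+$.

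Next I would transfer the comparability to the second coordinate. From $g(\r') = g(\r)g(\t) = h(\s)g(\t)$ together with $g(\r') = h(\s')$ we obtain $h(\s) \leq h(\s')$, so Lemma~\ref{lm:incmorphism} gives $\s \leq \s'$; write $\s' = \s\q$. Cancelling the common prefix $h(\s) = g(\r)$ from $h(\s)g(\t) = h(\s)h(\q)$ yields $g(\t) = h(\q)$. Since $\t \in A^+$ and $g$ sends nonempty words to nonempty words, $h(\q) \neq \epsilon$, whence $\q \in A^+$ as well. Thus $(\t, \q)$ is itself a solution and $(\r, \s) < (\r', \s')$ is a solution strictly smaller than $(\r', \s')$, contradicting the minimality of $(\r', \s')$. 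This proves that $\r$ and $\r'$ cannot be prefix comparable. The statement for $\s$ and $\s'$ follows by the symmetric argument with the roles of $g$ and $h$ interchanged; equivalently, the chain of equivalences $\r \leq \r' \Leftrightarrow h(\s) \leq h(\s') \Leftrightarrow \s \leq \s'$ shows that the two comparabilities stand or fall together.

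The argument is essentially cancellation and minimality bookkeeping, so I do not expect a serious obstacle. The one point requiring care is confirming that the extracted pair $(\t, \q)$ lies in $A^+ \times A^+$ rather than being partly empty: this is exactly where distinctness of the solutions rules out $\t = \epsilon$ and injectivity of $g$ forces $\q \neq \epsilon$ once $\t \neq \epsilon$, after which invoking the definition of minimality is immediate.
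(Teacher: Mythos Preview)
Your proof is correct and follows essentially the same contradiction argument as the paper: assume prefix comparability on one coordinate, push it through via Lemma~\ref{lm:incmorphism} to the other coordinate, and contradict minimality. Your version is slightly more explicit in treating the degenerate case $\r = \r'$ and in noting that the analogue of Lemma~\ref{lm:incmorphism} holds for $g$ as well (needed for the symmetric half), but the approach is the same.
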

\begin{proof}
	Assume that $\r$  and $\r'$ are prefix comparable, and assume, without loss of generality, that  $\r'=\r\q$, with $\q \in A^+$. Then $g(\r')=g(\r)g(\q)=h(\s)g(\q)=h(\s')$ and $h(\s)<h(\s')$. It follows by Lemma \ref{lm:incmorphism} that $\s < \s'$, hence $(\r', \s')$ is not minimal. Similarly, we prove that $\s$ and $\s'$ are not prefix comparable. 
\end{proof}

\begin{figure}[ht]
\centering
\begin{tikzpicture}[anchor=base, baseline, inner sep = 6pt] 
\foreach \x/\y in {1/a,2/b,3/c,4/d,5/c,6/d,7/a,8/b}
\node (dol\x) at (\x*6mm,0) {$\y$};
\foreach \x/\y in {1/a,2/b,3/c,4/d,5/c,6/d,7/a,8/b}
\node (hor\x) at (\x*6mm,.6) {$\y$};
\draw (hor1.south west) -- (hor8.south east);
\draw (dol1.south west) -- (dol8.south east);
\draw (hor1.north west) -- (dol1.south west); 
\draw (hor1.north west-|hor8.south east) -- (dol8.south east); 
\draw (hor1.north west) -- (hor8.south east|-hor1.north west); 
\draw (hor2.south east|-hor1.north west) -- (hor2.south east); 
\draw (hor4.south east|-hor1.north west) -- (hor4.south east); 
\draw (hor6.south east|-hor1.north west) -- (hor6.south east); 
\draw (hor3.south east) -- (dol3.south east); 
\draw (hor5.south east) -- (dol5.south east); 
\node at (12mm,-.55) {$\a$};
\node at (27mm,-.55) {$\c$};
\node at (42mm,-.55) {$\b$};
\node at (9mm,1.1) {$\a$};
\node at (21mm,1.1) {$\c$};
\node at (33mm,1.1) {$\c$};
\node at (45mm,1.1) {$\a$};
\end{tikzpicture}
\quad\quad
	\def\Sab{\textcolor{mygreen}{ab}}	
\def\Sc{\textcolor{myred}{c}} 
\def\Scb{\textcolor{myorange}{cb}} 
\def\Sd{\textcolor{myblue}{d}} 
\raisebox{-1em}{
\begin{tikzpicture}
\node (1a) at (0,0) {$\Sab\Sc$};
\node (2a) at (2,0) {$\Sd\Sab$};
\node (3a) at (4,0) {$\Sd\Sc$};
\node (1b) at (0,1) {$\Sab$};
\node (2b) at (2,1) {$\Scb$};
\node (3b) at (4,1) {$\Sc\Sd$};
\draw[thick] (1a)-- (1b);
\draw[thick] (2a)-- (3a);
\end{tikzpicture}
}
    \caption{A representation of the solution $(\a\c\c\a, \a\c\b)$ and the free graph of morphisms of Example \ref{ex:infinit}.}.\label{fig:infinit}
\end{figure}
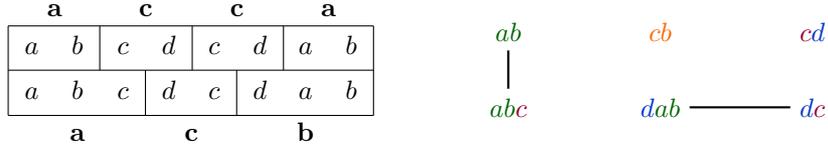

\begin{example}\label{ex:infinit}
Let 

$$\begin{array}{ll}
g(\a)=ab   & h(\a)=abc \\
g(\b)=cb    & h(\b)=dab \\
g(\c)=cd   & h(\c)=dc. \\
\end{array}$$
The pair $(\a\c\c\a, \a\c\b)$ is a solution. Indeed $g(\a\c\c\a)=abcdcdab=h(\a\c\b)$. See Figure \ref{fig:infinit} for a representation of the solution and the free graph. The free basis of $Z$ is $B=\{ab, c, cb, d\}$ and we highlight the decomposition into the free basis of $Z$ by different colors. The edges of $G_Z$ are $E_z=\{[g(\a), h(\a)],[h(\b), h(\c)]\}$. One can verify that the set of minimal solutions is $\{(\a\c^i\b,\a\c^{i+1}\a) \mid i \geq 0\}$ and the intersecton is therefore $(abc(dc)^*dab)^*$. \end{example}

This way, the problem of finding the intersection $X^* \cap U^*$ is reduced to the problem of finding minimal elements of the coincidence set of morphisms $g$ and $h$.
 Indeed, when we find a minimal solution $(\r, \s)$, with $\r,\s  \in A^*$, then $g(\r)$ (which is equal to $h(\s)$) is an element of the minimal generating set of the intersection $X^* \cap U^*$. 
 As we have seen in Example \ref{ex:infinit}, the
 intersection of two $3$-maximal submonoids can be infinitely generated. We shall see that in the case of a finite number of generators, the cardinality is at most two. A trivial example of a two generated intersection is $\{a,b,c\}^* \cap \{a,b,d\}^* = \{a,b\}^*$. A less trivial example is the following.

\begin{example}\label{ex:triangle}
	Let 
$$\begin{array}{ll}
g(\a)=ab   & h(\a)=abbc \\
g(\b)=bcdd    & h(\b)=abcb \\
g(\c)=cbdd   & h(\c)=ddab\,.
\end{array}$$
There are only two minimal solutions, namely $(\a\b\a,\a\c)$ and $(\a\c\a,\b\c)$, hence the submonoid intersection is finitely generated by $\{abbcddab, abcbddab\}$. The free basis of $Z$ is $B=\{ab,ac,bd,cd,da\}$, see Figure \ref{fig:triangle} for a representations of the two solutions and the free graph.
\end{example}

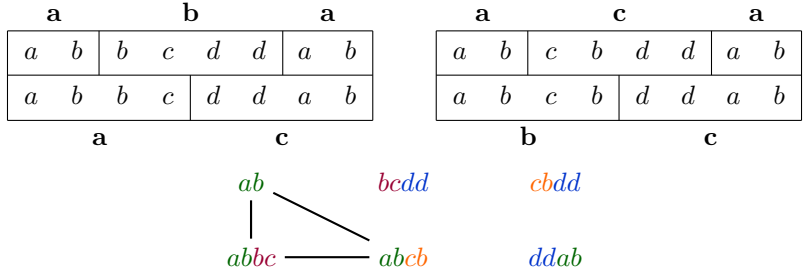
\begin{figure}[ht]
\begin{tikzpicture}[anchor=base, baseline, inner sep = 6pt] 
\foreach \x/\y in {1/a,2/b,3/b,4/c,5/d,6/d,7/a,8/b}
\node (dol\x) at (\x*6mm,0) {$\y$};
\foreach \x/\y in {1/a,2/b,3/b,4/c,5/d,6/d,7/a,8/b}
\node (hor\x) at (\x*6mm,.6) {$\y$};
\draw (hor1.south west) -- (hor8.south east);
\draw (dol1.south west) -- (dol8.south east);
\draw (hor1.north west) -- (dol1.south west); 
\draw (hor1.north west-|hor8.south east) -- (dol8.south east); 
\draw (hor1.north west) -- (hor8.south east|-hor1.north west); 
\draw (hor2.south east|-hor1.north west) -- (hor2.south east); 
\draw (hor6.south east|-hor1.north west) -- (hor6.south east); 
\draw (hor4.south east) -- (dol4.south east); 
\node at (15mm,-.55) {$\a$};
\node at (39mm,-.55) {$\c$};
\node at (9mm,1.1) {$\a$};
\node at (27mm,1.1) {$\b$};
\node at (45mm,1.1) {$\a$};
\end{tikzpicture}
\quad\quad
\begin{tikzpicture}[anchor=base, baseline, inner sep = 6pt] 
\foreach \x/\y in {1/a,2/b,3/c,4/b,5/d,6/d,7/a,8/b}
\node (dol\x) at (\x*6mm,0) {$\y$};
\foreach \x/\y in {1/a,2/b,3/c,4/b,5/d,6/d,7/a,8/b}
\node (hor\x) at (\x*6mm,.6) {$\y$};
\draw (hor1.south west) -- (hor8.south east);
\draw (dol1.south west) -- (dol8.south east);
\draw (hor1.north west) -- (dol1.south west); 
\draw (hor1.north west-|hor8.south east) -- (dol8.south east); 
\draw (hor1.north west) -- (hor8.south east|-hor1.north west); 
\draw (hor2.south east|-hor1.north west) -- (hor2.south east); 
\draw (hor6.south east|-hor1.north west) -- (hor6.south east); 
\draw (hor4.south east) -- (dol4.south east); 
\node at (15mm,-.55) {$\b$};
\node at (39mm,-.55) {$\c$};
\node at (9mm,1.1) {$\a$};
\node at (27mm,1.1) {$\c$};
\node at (45mm,1.1) {$\a$};
\end{tikzpicture}
	\def\ab{\textcolor{mygreen}{ab}}	
	\def\bc{\textcolor{myred}{bc}} 
	\def\cb{\textcolor{myorange}{cb}} 
	\def\dd{\textcolor{myblue}{dd}} 
	\centering
	\begin{tikzpicture}[]
	\node (1a) at (0,1) {$\ab$};
	\node (2a) at (2,1) {$\bc\dd$};
	\node (3a) at (4,1) {$\cb\dd$};
	\node (1b) at (0,0) {$\ab\bc$};
	\node (2b) at (2,0) {$\ab\cb$};
	\node (3b) at (4,0) {$\dd\ab$};
	\draw[thick] (1a)-- (1b);
	\draw[thick] (1a)-- (2b);
	\draw[thick] (1b)-- (2b);
	\end{tikzpicture}
	\caption{A representation of the solutions $(\a\b\a,\a\c)$ and $(\a\c\a,\b\c)$ and the free graph of morphisms of Example \ref{ex:triangle}.} \label{fig:triangle}
\end{figure}{}

In what follows, we equivalently refer to $X$ (resp. $U$) and $g(A)$ (resp. $h(A))$. Since $g(\a)$, $g(\b)$, $g(\c) \in \FH Z$ and $h(\a)$, $h(\b)$, $h(\c) \in \FH Z$ by definition, we have that $w \in \FH Z$ for any element $w$ of the intersection. 

As mentioned before, we often use the free graph of $G_Z$ as the source of information about the free basis of $Z$. The set $V_Z$ of nodes is the union of the images  $g(A)$ and $h(A)$. In figures, we graphically arrange nodes in $V_Z$ in two rows containing elements from $g(A)$ and $h(A)$ respectively.  We know that the number of connected components is the free rank of $Z$, which is at least four. Moreover, we naturally distinguish two different kinds of edges. The edges that involve nodes in the same set, either $g(A)$ or $h(A)$, are {\em horizontal edges}, and the edges that involve one node of $g(A)$ and one of $h(A)$ are {\em vertical edges}. 

The following two observations are immediate: 
\begin{itemize}
	\item A morphism $g$ is $Z$-marked iff there are no horizontal edges in the corresponding row. Indeed, by definition, $[g(\a_1),g(\a_2)] \in E_Z$ iff $\first_Z(g(\a_1))=\first_Z(g(\a_2))$. Analogously for the morphism $h$. 
	\item A solution creates a vertical edge. Indeed, if $(\r,\s) \in C(g,h)$ we have $\first_Z(g(\r_1)) = \first_Z (h(\s_1))$ and $[g(\r_1),g(\s_1)] \in E_Z$, where $\r_1 = \first (\r)$ and $\s_1 = \first (\s)$.
\end{itemize}

This implies the following property of our morphisms.

\begin{lemma}\label{lm:marked}
If $C(g,h) \neq \emptyset$ then either $g$ or $h$ is Z-marked. Moreover, if $h$ is not marked then there exist exactly two letters $\a_1,\a_2 \in A$ such that $\first_Z(h(\a_1))=\first_Z(h(\a_2))$.
\end{lemma}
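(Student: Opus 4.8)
The plan is to establish both parts of Lemma~\ref{lm:marked} by exploiting the relationship between solutions, vertical edges, and the connected-component count $c(Z) = r_f(Z) \geq 4$ guaranteed by Lemma~\ref{lm:freegraphZ}. First I would count the vertices and edges available. The six nodes of $G_Z$ (the images $g(\a),g(\b),g(\c)$ in one row and $h(\a),h(\b),h(\c)$ in the other) are split among $c(Z)\geq 4$ connected components, so at most two edges can be present in the whole graph; equivalently, the graph has at least $6 - c(Z) \leq 2$ edges only if components are trees, but the sharp bound is that a graph on $6$ vertices with $c$ components has at least $6-c$ vertices beyond the isolated ones, forcing the number of edges to be small. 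Concretely, since any connected graph on $n$ vertices needs at least $n-1$ edges, and the free graph decomposes into $c(Z)$ components, the total number of edges is at least $|V_Z| - c(Z) = 6 - c(Z) \leq 2$; what I actually need is the upper direction, so I would argue instead that each nontrivial component merges vertices and reduces the component count, and that $c(Z)\geq 4$ limits how many merges can occur.

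For the first claim, suppose toward contradiction that neither $g$ nor $h$ is $Z$-marked. By the first bulleted observation, non-$Z$-markedness of $g$ produces a horizontal edge in the $g$-row, and likewise for $h$, giving at least two horizontal edges. Since $C(g,h)\neq\emptyset$, the second observation yields at least one vertical edge. I would then show these three edges (two horizontal, one vertical) force too few connected components: the two horizontal edges each merge two nodes within a row, and the vertical edge connects the two rows, so together they link enough of the six vertices to bring $c(Z)$ down to $3$ or fewer, contradicting Lemma~\ref{lm:freegraphZ}. The careful step here is to check all the ways the endpoints of these edges can coincide or be distinct, ensuring that in every configuration the component count drops to at most $3$; the worst case (maximizing components) is when the edges are as ``spread out'' as possible, and I would verify that even then one obtains $c(Z)\leq 3$.

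For the second claim, assume $h$ is not marked. I must first upgrade this to: $h$ is not $Z$-marked. This needs a short argument that markedness and $Z$-markedness coincide in the relevant sense, or a direct check that if $h$ fails to be marked then it fails to be $Z$-marked (since $\first_Z$ refines $\first$). Granting a horizontal edge in the $h$-row, I would then argue that there are \emph{exactly} two letters sharing a first $Z$-letter. If all three letters $h(\a),h(\b),h(\c)$ shared the same $\first_Z$-value, the $h$-row alone would contribute two edges collapsing all three $h$-nodes into one component; combined with the vertical edge forced by the nonempty coincidence set, and with the fact that by the first claim $g$ \emph{is} $Z$-marked (so the $g$-row is edgeless), I would count components and again reach a contradiction with $c(Z)\geq 4$. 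Thus at most two of the three $h$-images share a first $Z$-letter, and since $h$ is not $Z$-marked at least two do, giving exactly two.

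The main obstacle I anticipate is the case analysis in the first claim: carefully bounding $c(Z)$ from above across all incidence patterns of the two horizontal and one vertical edge, making sure no configuration secretly yields $c(Z)=4$. The cleanest route is probably to phrase everything through the identity $c(Z) = 6 - (\text{number of independent edge-induced merges})$ and to observe that two horizontal edges plus one vertical edge always induce at least three such merges, forcing $c(Z)\leq 3$ and contradicting Lemma~\ref{lm:freegraphZ}. I would keep the argument at the level of counting components rather than grinding through explicit vertex identifications.
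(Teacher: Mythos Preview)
Your approach is essentially the paper's: a nonempty coincidence set forces a vertical edge in $G_Z$, and together with $c(Z)=r_f(Z)\geq 4$ (Lemma~\ref{lm:freegraphZ}) this rules out the presence of two horizontal edges, from which both assertions follow. The paper's proof is literally two sentences plus a figure enumerating the forbidden two-horizontal-edge configurations; your component-counting plan reproduces exactly this, only more verbosely. (Incidentally, in the first claim the two horizontal edges lie in different rows and there is a single vertical edge, so no cycle can form and the three edges always yield three independent merges; the case analysis you anticipate is therefore vacuous.)

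One genuine slip: your proposed bridge from ``$h$ not marked'' to ``$h$ not $Z$-marked'' via ``$\first_Z$ refines $\first$'' has the implication backwards. From $\first_Z(u)=\first_Z(v)$ one obtains $\first(u)=\first(v)$, not the converse; Example~\ref{ex:first} already exhibits $ca$ and $cacabcacb$, which share a first letter yet have distinct $\first_Z$. Hence ``not marked'' does \emph{not} imply ``not $Z$-marked'' in general. In the lemma the word ``marked'' in the second sentence is simply shorthand for ``$Z$-marked'' (the conclusion is stated in terms of $\first_Z$), so no bridging step is required: read both occurrences as $Z$-marked and your argument for the second claim goes through unchanged.
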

\begin{proof}

\begin{figure}[ht]
	\centering
	\begin{tikzpicture}[dot/.style={circle,fill=black,
		inner sep=1.5pt},scale=.5]
	\fill[gray!30, rounded corners = 5pt] (-.3,-.3) rectangle (2.3,1.3);
	\node[dot] (1a) at (0,1) {};
	\node[dot] (2a) at (1,1) {};
	\node[dot] (3a) at (2,1) {};
	\node[dot] (1b) at (0,0) {};
	\node[dot] (2b) at (1,0) {};
	\node[dot] (3b) at (2,0) {};
	\draw[thick] (1a)-- (1b);
	\draw[thick] (1a)-- (2a);
	\draw[thick] (1b)-- (2b);
	\end{tikzpicture}	
\quad	
	\begin{tikzpicture}[dot/.style={circle,fill=black,
		inner sep=1.5pt},scale=.5]
	\fill[gray!30, rounded corners = 5pt] (-.3,-.3) rectangle (2.3,1.3);
	\node[dot] (1a) at (0,1) {};
	\node[dot] (2a) at (1,1) {};
	\node[dot] (3a) at (2,1) {};
	\node[dot] (1b) at (0,0) {};
	\node[dot] (2b) at (1,0) {};
	\node[dot] (3b) at (2,0) {};
	\draw[thick] (1a)-- (1b);
	\draw[thick] (1a)-- (2a);
	\draw[thick] (2a)-- (3a);
	\end{tikzpicture}
\quad	
\begin{tikzpicture}[dot/.style={circle,fill=black,
	inner sep=1.5pt},scale=.5]
	\fill[gray!30, rounded corners = 5pt] (-.3,-.3) rectangle (2.3,1.3);
	\node[dot] (1a) at (0,1) {};
	\node[dot] (2a) at (1,1) {};
	\node[dot] (3a) at (2,1) {};
	\node[dot] (1b) at (0,0) {};
	\node[dot] (2b) at (1,0) {};
	\node[dot] (3b) at (2,0) {};
	\draw[thick] (1a)-- (1b);
	\draw[thick] (1b)-- (2b);
	\draw[thick] (2b)-- (3b);
\end{tikzpicture}
\quad	
\begin{tikzpicture}[dot/.style={circle,fill=black,
	inner sep=1.5pt},scale=.5]
	\fill[gray!30, rounded corners = 5pt] (-.3,-.3) rectangle (2.3,1.3);
	\node[dot] (1a) at (0,1) {};
	\node[dot] (2a) at (1,1) {};
	\node[dot] (3a) at (2,1) {};
	\node[dot] (1b) at (0,0) {};
	\node[dot] (2b) at (1,0) {};
	\node[dot] (3b) at (2,0) {};
	\draw[thick] (1a)-- (1b);
	\draw[thick] (1a)-- (2a);
	\draw[thick] (2b)-- (3b);
\end{tikzpicture}
\quad	
\begin{tikzpicture}[dot/.style={circle,fill=black,
	inner sep=1.5pt},scale=.5]
\fill[gray!30, rounded corners = 5pt] (-.3,-.3) rectangle (2.3,1.3);
\node[dot] (1a) at (0,1) {};
\node[dot] (2a) at (1,1) {};
\node[dot] (3a) at (2,1) {};
\node[dot] (1b) at (0,0) {};
\node[dot] (2b) at (1,0) {};
\node[dot] (3b) at (2,0) {};
\draw[thick] (1a)-- (1b);
\draw[thick] (1b)-- (2b);
\draw[thick] (2a)-- (3a);
\end{tikzpicture}
\quad	
\begin{tikzpicture}[dot/.style={circle,fill=black,
	inner sep=1.5pt},scale=.5]
	\fill[gray!30, rounded corners = 5pt] (-.3,-.3) rectangle (2.3,1.3);
	\node[dot] (1a) at (0,1) {};
	\node[dot] (2a) at (1,1) {};
	\node[dot] (3a) at (2,1) {};
	\node[dot] (1b) at (0,0) {};
	\node[dot] (2b) at (1,0) {};
	\node[dot] (3b) at (2,0) {};
	\draw[thick] (1a)-- (1b);
	\draw[thick] (2a)-- (3a);
	\draw[thick] (2b)-- (3b);
\end{tikzpicture}
	\caption{Free graphs with a nonempty coincidence set and two horizontal edges.}
	\label{fig:components}
\end{figure}
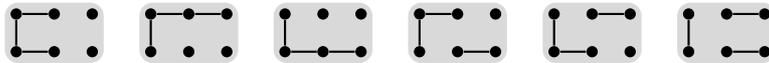

Since the set of solutions is nonempty, there is at least one vertical edge in the free graph of $Z$.
Since the free rank of $Z$ is at least four, the free graph cannot contain two horizontal edges (see Figure \ref{fig:components}). The claim follows. 
\end{proof}

Examples \ref{ex:infinit} and \ref{ex:triangle} shows two cases in which the morphism $g$ is $Z$-marked and $h$ is not.
The following example shows two $Z$-marked morphisms $g$ and $h$ which have two minimal solutions. 

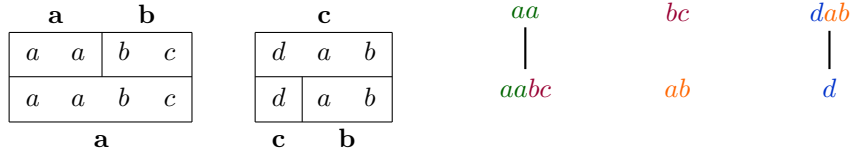
\begin{figure}[ht]
    \centering
    
    \begin{tikzpicture}[anchor=base, baseline, inner sep = 6pt] 
    \foreach \x/\y in {1/a,2/a,3/b,4/c}
    \node (dol\x) at (\x*6mm,0) {$\y$};
    \foreach \x/\y in {1/a,2/a,3/b,4/c}
    \node (hor\x) at (\x*6mm,.6) {$\y$};
    \draw (hor1.south west) -- (hor4.south east);
    \draw (dol1.south west) -- (dol4.south east);
    \draw (hor1.north west) -- (dol1.south west); 
    \draw (hor2.north west-|hor4.south east) -- (dol4.south east); 
    \draw (hor2.north west-|hor1.south west) -- (hor4.south east|-hor1.north west); 
    \draw (hor2.south east|-hor1.north west) -- (hor2.south east); 
    \node at (15mm,-.55) {$\a$};
    \node at (9mm,1.1) {$\a$};
    \node at (21mm,1.1) {$\b$};
    \end{tikzpicture}
    \quad\quad
    \begin{tikzpicture}[anchor=base, baseline, inner sep = 6pt] 
    \foreach \x/\y in {1/d,2/a,3/b}
    \node (dol\x) at (\x*6mm,0) {$\y$};
    \foreach \x/\y in {1/d,2/a,3/b}
    \node (hor\x) at (\x*6mm,.6) {$\y$};
    \draw (hor1.south west) -- (hor3.south east);
    \draw (dol1.south west) -- (dol3.south east);
    \draw (hor2.north west-|hor1.south west) -- (dol1.south west); 
    \draw (hor2.north west-|hor3.south east) -- (dol3.south east); 
    \draw (hor2.north west-|hor1.south west) -- (hor3.south east|-hor2.north west); 
	\draw (hor1.south east) -- (dol1.south east); 
    \node at (6mm,-.55) {$\c$};
    \node at (15mm,-.55) {$\b$};
    \node at (12mm,1.1) {$\c$};
   \end{tikzpicture}
    \quad\quad
    \def\Saa{\textcolor{mygreen}{aa}}	
	\def\Sbc{\textcolor{myred}{bc}} 
	\def\Sab{\textcolor{myorange}{ab}} 
	\def\Sd{\textcolor{myblue}{d}} 
	\begin{tikzpicture}
	\node (1a) at (0,1) {$\Saa$};
	\node (2a) at (2,1) {$\Sbc$};
	\node (3a) at (4,1) {$\Sd\Sab$};
	\node (1b) at (0,0) {$\Saa\Sbc$};
	\node (2b) at (2,0) {$\Sab$};
	\node (3b) at (4,0) {$\Sd$};
	\draw[thick] (1a)-- (1b);
	\draw[thick] (3a)-- (3b);
	\end{tikzpicture}	
    \caption{A representation of the solutions $(\a\b, \a), (\c,\c\b)$ and the free graph of two $Z$-marked morphisms from Example \ref{ex:twomarked}.}
    \end{figure}\label{fig:ghmarked}

\begin{example}\label{ex:twomarked}
Let
$$\begin{array}{ll}
g(\a)=aa   & h(\a)=aabc \\
g(\b)=bc    & h(\b)=ab \\
g(\c)=dab   & h(\c)=d\,.
\end{array}$$

The free basis of $Z$ is $B=\{aa, ab, bc, d\}$, $g$ and $h$ are both $Z$-marked, and the only two minimal solutions are $(\a\b, \a), (\c,\c\b)$.
In such a case each minimal solution introduces a vertical edge, which yields four connected components (cf. Figure \ref{fig:ghmarked}). 
\end{example}

\medskip
By symmetry, we shall suppose in what follows that $g$ is $Z$-marked, $\first_Z(h(\a))\neq\first_Z(h(\c))$ and $\first_Z(h(\b))\neq\first_Z(h(\c))$.

\medskip

Now we introduce the key ingredient of the proof of our theorem, namely  the definition of the critical overflow which was first introduced in \cite{Ehrenfeucht1983} (see also \cite[pp. 347--351]{Handbook}).
	We say that the word $o \in \Sigma^*$ is a \emph{critical overflow} if   
	$g(\u)= h(\vv)o$, for some $\u, \vv \in A^*$, and there are pairs $(\u_1,\u_2)$, $(\vv_1,\vv_2)$ in $A^* \times A^*$ such that $\first (\u_1) \neq \first (\u_2)$, $\first (\vv_1) \neq \first (\vv_2)$ and both $g(\u\u_1)=h(\vv\vv_1)$ and $g(\u\u_2)=h(\vv\vv_2)$. Moreover, we say that $o$ is {\em a critical overflow on} $(\u, \vv)$.

	Informally, if $o$ is a critical overflow on a pair $(\u, \vv)$, then $(\u, \vv)$ is a prefix of at least two distinct minimal solutions $(\u\u_1, \vv\vv_1)$ and $(\u\u_2, \vv\vv_2)$. It represents the situation when the continuation of $(\u, \vv)$ is not given uniquely during the construction of the minimal solution neither for $\u$ nor for $\vv$.

		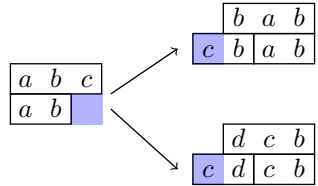
\begin{figure}[ht]
		\centering
 \begin{tikzpicture}[anchor=base,line width = .5pt]
\def\krok{4mm} 
\foreach \x/\y/\p in {0/3/o,1/1/a,1/2/b,1/3/c,0/1/a,0/2/b,2/7/c,2/8/b,2/9/a,2/10/b,3/8/b,3/9/a,3/10/b,{-1}/8/d,{-1}/9/a,{-1}/10/b,{-2}/7/c,{-2}/8/d,{-2}/9/a,{-2}/10/b}
{
	\node[minimum size = {\krok - .5pt}] (node\x\y) at (\y*\krok,\x*\krok) {};
}
\foreach \x/\y/\p in {1/1/a,1/2/b,1/3/c,0/1/a,0/2/b}
\foreach \x/\yaa/\ybb in {0/3/3,2/7/7,{-2}/7/7}
{
	\draw[color = blue!30, fill = blue!30] (node\x\yaa.south west) rectangle (node\x\ybb.north east);
} 
\foreach \x/\yaa/\ybb in {1/1/3,0/1/2,2/7/8,2/9/10,3/8/10,{-1}/8/10,{-2}/7/8,{-2}/9/10}
{
	\draw (node\x\yaa.south west) rectangle (node\x\ybb.north east);
}
\foreach \x/\y/\p in {1/1/a,1/2/b,1/3/c,0/1/a,0/2/b,2/7/c,2/8/b,2/9/a,2/10/b,3/8/b,3/9/a,3/10/b,{-1}/8/d,{-1}/9/c,{-1}/10/b,{-2}/7/c,{-2}/8/d,{-2}/9/c,{-2}/10/b}
{
	\node[anchor = base] at (\y*\krok,\x*\krok-3pt) {$\p$};
}
\draw[->] (3.8*\krok, .5*\krok) -- (6*\krok, 2*\krok);
\draw[->] (3.8*\krok, 0*\krok) -- (6*\krok,-2*\krok);
\end{tikzpicture}	
		\caption{A critical overflow of morphisms of Example \ref{ex:critical}.} \label{fig:critical}
	\end{figure}
	
	\begin{example}\label{ex:critical}
		Let $g(\a)=abc, g(\b)=bab$ and $g(\c)=dcb$, $h(\a)=ab, h(\b)=cb$ and $h(\c)=cd.$ Then $c$ is a critical overflow on $(\u,\vv)=(\a,\a)$ and two minimal solutions are $(\a\b,\a\b\a)$ and $(\a\c,\a\c\b)$. See Figure \ref{fig:critical} for a representation.     
	\end{example}{}	
	
	\begin{Remark}\label{rm:blocksoverflows}
Since $g$ and $h$ are morphism and $\FH Z$ is free, it follows that the critical overflows belongs to $\FH Z$.
\end{Remark}

The previous remark is a basic trivial property of free monoids and its free basis but it is fundamental for the proof of the following results that characterize the critical overflows in our setting and the corresponding properties of the free graph.
	
\begin{Remark}
	For sake of completeness, we should also consider the case when $o$ is nonempty and $g(\u)o= h(\vv)$ in the definition of the critical overflow. Note however, that such a situation is excluded by the hypothesis that $g$ is marked.
\end{Remark}{}	
	
	\begin{proposition}\label{prop:cases}
		If $(\r, \s)$ and $(\r', \s')$ are two distinct minimal solutions then there is a critical overflow $o$ on $(\u, \vv)$,  with $\u = \r \wedge \r'$ and $\vv = \s \wedge \s'$. Therefore, $h$ is Z-marked iff $o$ is an empty overflow.
	\end{proposition}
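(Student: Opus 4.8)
The plan is to read the critical overflow off the longest common prefixes of the two solutions. Set $\u=\r\wedge\r'$ and $\vv=\s\wedge\s'$. By Lemma~\ref{lm:notcompa}, $\r,\r'$ are not prefix comparable and neither are $\s,\s'$, so I can write $\r=\u\r_1$, $\r'=\u\r_1'$, $\s=\vv\s_1$, $\s'=\vv\s_1'$ with all four tails nonempty and $\first(\r_1)\neq\first(\r_1')$, $\first(\s_1)\neq\first(\s_1')$. Because $\u$ and $\vv$ are prefixes of $\r$ and $\s$, the words $g(\u)$ and $h(\vv)$ are $Z$-prefixes of the common word $W:=g(\r)=h(\s)$; hence they are $Z$-comparable and whichever is shorter is obtained from the longer by deleting whole blocks of $\FB(Z)$, so in particular the overflow between them lies in $\FH{Z}$ (Remark~\ref{rm:blocksoverflows}).

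Next I would pin down the orientation $h(\vv)\leq g(\u)$ required by the definition. If instead $g(\u)<h(\vv)$, write $h(\vv)=g(\u)o'$ with $o'\in\FH{Z}$ nonempty; cancelling $g(\u)$ in $g(\u)g(\r_1)=h(\vv)h(\s_1)$ gives $g(\r_1)=o'h(\s_1)$, and similarly $g(\r_1')=o'h(\s_1')$, so $o'$ is a nonempty common $Z$-prefix of $g(\r_1)$ and $g(\r_1')$. Its first block then forces $\first_Z(g(\first(\r_1)))=\first_Z(g(\first(\r_1')))$, contradicting that $g$ is $Z$-marked (this is precisely the situation excluded in the Remark preceding the statement). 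Therefore $h(\vv)\leq g(\u)$, and defining $o$ by $g(\u)=h(\vv)o$ and taking $(\u_1,\u_2)=(\r_1,\r_1')$, $(\vv_1,\vv_2)=(\s_1,\s_1')$ verifies every clause of the definition; thus $o$ is a critical overflow on $(\u,\vv)$.

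For the final equivalence I would first note that $o=\epsilon$ forces $\u=\vv=\epsilon$: if $g(\u)=h(\vv)$ with $\u\neq\epsilon$, then $(\u,\vv)$ and $(\r_1,\s_1)$ are solutions and $(\r,\s)=(\u,\vv)(\r_1,\s_1)$ contradicts minimality. Cancelling $h(\vv)$ in $h(\vv)h(\s_1)=g(\u)g(\r_1)=h(\vv)\,o\,g(\r_1)$ shows $o$ is a $Z$-prefix of both $h(\s_1)$ and $h(\s_1')$. If $h$ is $Z$-marked, a nonempty $o$ would yield $\first_Z(h(\first(\s_1)))=\first_Z(h(\first(\s_1')))$, which is impossible; hence $o=\epsilon$. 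For the converse, suppose $o=\epsilon$, so $\u=\vv=\epsilon$ and thus $\first(\r)\neq\first(\r')$, $\first(\s)\neq\first(\s')$. Assume toward a contradiction that $h$ is not $Z$-marked; by Lemma~\ref{lm:marked} and the standing assumption this means $\first_Z(h(\a))=\first_Z(h(\b))$, so $h$ attains only two values of $\first_Z$. Since $g$ is $Z$-marked, $g(\r)$ and $g(\r')$ begin with different blocks of $\FB(Z)$, whence $\first_Z(h(\first(\s)))\neq\first_Z(h(\first(\s')))$; consequently the two solutions are of opposite type, one beginning on the $\s$-side with $\c$ and the other with a letter of $\{\a,\b\}$. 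I would then invoke the free graph: the horizontal edge $[h(\a),h(\b)]$ together with the two vertical edges contributed by these solutions already partition the images of $g$ and $h$ into at most three connected components, contradicting $c(Z)>3$ (Lemma~\ref{lm:freegraphZ}). Hence $h$ is $Z$-marked.

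The main obstacle is this converse. The local overflow computation only tells me that the two first letters $\first(\s),\first(\s')$ carry different $\FB(Z)$-blocks, which is by itself consistent with $h$ being unmarked; the decisive ingredient is the global rank bound $c(Z)>3$, used to rule out the mixed configuration. The one routine point I would check there is that any coincidences among the six words $g(\a),\dots,h(\c)$ only lower the number of components, so the contradiction with Lemma~\ref{lm:freegraphZ} survives.
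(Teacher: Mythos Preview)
Your proof is correct and follows essentially the same approach as the paper. The paper organizes the argument as a three-way case split on whether $\u$ and $\vv$ are empty, while you first establish the orientation $h(\vv)\leq g(\u)$ uniformly and then prove the biconditional directly; but the substantive ingredients are identical in both: $g$ being $Z$-marked forces the orientation, a nonempty overflow lying in $\FH{Z}$ forces $\first_Z(h(\b))=\first_Z(h(\b'))$ and hence $h$ unmarked, and in the empty case the two vertical edges together with a hypothetical horizontal edge would drop $c(Z)$ to at most three, contradicting Lemma~\ref{lm:freegraphZ}. Your extra observation that $o=\epsilon$ forces $\u=\vv=\epsilon$ via minimality is not needed in the paper's case-by-case layout but is a clean way to recover the same information.
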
{}
	\begin{proof}
		By lemma \ref{lm:notcompa}, the components of the two minimal solutions are not prefix comparable respectively. Therefore $\r = \u\u_1$, $\r' = \u \u_2$, $\s = \vv \vv_1$ and $\s' = \vv\vv_2$ where $\u = \r \wedge \r'$, $\vv = \s \wedge \s'$, and  all $\u_1$, $\u_2$, $\vv_1$, $\vv_2$ are nonempty. Let $\a=\first(\u_1)$, $\a'=\first(\u_2)$, $\b=\first(\vv_1)$ and $\b'=\first(\vv_2)$ where $\a\neq \a'$ and $\b\neq \b'$. The case $\u = \epsilon$ and $\vv \neq \epsilon$ is excluded by the assumption that $g$ is marked. We have the following cases:
		\begin{itemize}
			\item If $\u = \vv= \epsilon$, then the empty word is a critical overflow on $(\epsilon, \epsilon)$. Since $G_Z$ has two vertical edges $[g(\a), h(\b)]$ and $[g(\a'), h(\b')]$, it cannot have an horizontal edge, hence $h$ is Z-marked.
			\item If $\u \neq \epsilon$ and $\vv=\epsilon$, then $g(\u)$ is a nonempty critical overflow on $(\u, \epsilon)$. We have $h(\b)\neq h(\b')$, but $\first_Z(h(\b)) = \first_Z(h(\b'))$ i.e., $h$ is not $Z$-marked.
		
			\item Finally, if both $\u \neq \epsilon$ and $\vv \neq \epsilon$, then  we have $h(\vv) < g(\u)$ because $g$ is marked and moreover there is a nonempty critical overflow $o$ with $g(\u)=h(\vv)o$.  By Remark \ref{rm:blocksoverflows} we have that $\first_Z(o)=first_Z(h(\b))=first_Z(h(\b'))$, hence $h$ is not Z-marked.
		\end{itemize}{}
		\end{proof}{}

\begin{figure}[ht]
	\centering
	\begin{tikzpicture}[dot/.style={circle,fill=black,
		inner sep=1.5pt},scale=.5]
	\fill[gray!30, rounded corners = 5pt] (-.3,-.3) rectangle (2.3,1.3);
	\node[dot] (1a) at (0,1) {};
	\node[dot] (2a) at (1,1) {};
	\node[dot] (3a) at (2,1) {};
	\node[dot] (1b) at (0,0) {};
	\node[dot] (2b) at (1,0) {};
	\node[dot] (3b) at (2,0) {};
	\draw[thick] (1a)-- (1b);
	\draw[thick] (2a)-- (2b);
    \end{tikzpicture}	
\quad	
	\begin{tikzpicture}[dot/.style={circle,fill=black,
		inner sep=1.5pt},scale=.5]
	\fill[gray!30, rounded corners = 5pt] (-.3,-.3) rectangle (2.3,1.3);
	\node[dot] (1a) at (0,1) {};
	\node[dot] (2a) at (1,1) {};
	\node[dot] (3a) at (2,1) {};
	\node[dot] (1b) at (0,0) {};
	\node[dot] (2b) at (1,0) {};
	\node[dot] (3b) at (2,0) {};
	\draw[thick] (1a)-- (1b);
	\draw[thick] (1a)-- (2b);
	\draw[thick] (1b)-- (2b);
	\end{tikzpicture}
\quad	
    \begin{tikzpicture}[dot/.style={circle,fill=black,
		inner sep=1.5pt},scale=.5]
	\fill[gray!30, rounded corners = 5pt] (-.3,-.3) rectangle (2.3,1.3);
	\node[dot] (1a) at (0,1) {};
	\node[dot] (2a) at (1,1) {};
	\node[dot] (3a) at (2,1) {};
	\node[dot] (1b) at (0,0) {};
	\node[dot] (2b) at (1,0) {};
	\node[dot] (3b) at (2,0) {};
	\draw[thick] (1a)-- (1b);
	\draw[thick] (1b)-- (2b);
	\end{tikzpicture}	
\quad	
	\caption{Free graphs in the three cases of critical overflows.}\label{fig:freegraphcases}
\end{figure}
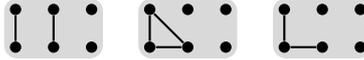

\begin{Remark}\label{rm:threecases}
   We can reformulate the three cases of critical overflows of the previous proof in terms of properties of $G_Z$ as follows.
	Let $(\r, \s)$ and $(\r', \s')$ be two distinct minimal solutions and $\u = \r \wedge \r'$, $\vv = \s \wedge \s'$. Let $\r = \u\u_1$, $\r' = \u \u_2$, $\s = \vv \vv_1$ and $\s' = \vv\vv_2$, $\a=\first(\r_1)$, $\a'=\first(\r_1')$, $\b=\first(\s_1)$ and $\b'=\first(\s_1')$ where $\a\neq \a'$ and $\b\neq \b'$. Then,
     \begin{enumerate}
	 \item \label{case:emptyov} If $\u = \vv= \epsilon$, $G_Z$ has two vertical edges $[g(\a), h(\b)]$ and $[g(\a'), h(\b')]$ (see the first case in Figure \ref{fig:freegraphcases}). Note that Example \ref{ex:twomarked} with Figure \ref{fig:ghmarked} show such a situation.
	 \item \label{case:triangle} If $\u \neq \epsilon$ and $\vv=\epsilon$, $G_Z$ has two vertical edges $[g(\first(\u)), h(\b)]$ and $[g(\first(\u)), h(\b')]$ and an horizontal edge $[h(\b), h(\b')]$  creating a connected component of three nodes (see the second case in Figure \ref{fig:freegraphcases}). Example \ref{ex:triangle} with Figure \ref{fig:triangle} verify such a case.
	 \item  \label{case:overflow} Finally, if both $\u \neq \epsilon$ and $\vv \neq \epsilon$, $G_Z$ has a vertical edge $[g(\first(\u)), h(\first(\vv))]$ and a horizontal edge $[h(\b), h(\b')]$ (see the first case in Figure \ref{fig:freegraphcases}).  Note that Example \ref{ex:infinit} and Figure \ref{fig:infinit} show such a situation.
	\end{enumerate}{}
	Note that in any of this cases $G_Z$ cannot have further edges. 
\end{Remark}

	\begin{lemma} \label{lm:criticaloverflow}
		Let $o$ be a critical nonempty overflow such that $g(\u)=h(\vv)o$ with $\u,\vv \in A^*$. 
		Let $\u_1,\u_2, \vv_1, \vv_2 \in A^+$ be such that $g(\u\u_1)=h(\vv\vv_1)$ and $g(\u\u_2)=h(\vv\vv_2)$ and \[\a=\first(\u_1)\neq \first(\u_2) = \a',\quad \b=\first(\vv_1)\neq\first(\vv_2) = \b'\,.\]  Then 
		$$o = h(\b) \wedge_Z h(\b').$$ 
	\end{lemma}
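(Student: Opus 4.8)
The plan is to work entirely inside the free hull $\FH Z$ and to pin $o$ down by two $Z$-prefix inequalities: that $o$ is a common $Z$-prefix of $h(\vv_1)$ and $h(\vv_2)$, and that it cannot be a proper one, the latter using that $g$ is $Z$-marked. First I would record the elementary consequence of the hypotheses. Since $g(\u\u_i)=g(\u)g(\u_i)=h(\vv)\,o\,g(\u_i)$ and $g(\u\u_i)=h(\vv\vv_i)=h(\vv)h(\vv_i)$, cancelling the common left factor $h(\vv)$ for $i=1,2$ gives
\[
o\,g(\u_1)=h(\vv_1),\qquad o\,g(\u_2)=h(\vv_2).
\]
Now $o\in\FH Z$ by Remark \ref{rm:blocksoverflows}, and $g(\u_i)\in\FH Z$ because it is a product of the $X$-words $g(\a),g(\b),g(\c)$. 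Hence each identity displays $h(\vv_i)$ as a product of two elements of $\FH Z$, so concatenating their $Z$-factorisations yields, by uniqueness, the $Z$-factorisation of $h(\vv_i)$. This shows $o$ is a $Z$-prefix of both $h(\vv_1)$ and $h(\vv_2)$, whence $o\leq_Z h(\vv_1)\wedge_Z h(\vv_2)$.

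The second step upgrades this to equality. Write $q=h(\vv_1)\wedge_Z h(\vv_2)$ and suppose for contradiction that $o<_Z q$. From $o\,g(\u_i)=h(\vv_i)$ the $Z$-factor of $h(\vv_i)$ immediately following $o$ is $\first_Z(g(\u_i))$, which equals $\first_Z(g(\a))$ for $i=1$ (as $\u_1$ begins with $\a$) and $\first_Z(g(\a'))$ for $i=2$. If $o$ were a proper $Z$-prefix of $q$, then since $q\leq_Z h(\vv_i)$ the $Z$-factor of $q$ just after $o$ would be a common $Z$-factor of $h(\vv_1)$ and $h(\vv_2)$ right after $o$, forcing $\first_Z(g(\a))=\first_Z(g(\a'))$; this contradicts $\a\neq\a'$ together with the assumption that $g$ is $Z$-marked. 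Therefore $o=h(\vv_1)\wedge_Z h(\vv_2)$.

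It remains to identify $h(\vv_1)\wedge_Z h(\vv_2)$ with $h(\b)\wedge_Z h(\b')$. As $\b=\first(\vv_1)$ and $\b'=\first(\vv_2)$, we have $h(\b)\leq_Z h(\vv_1)$ and $h(\b')\leq_Z h(\vv_2)$, so $h(\b)\wedge_Z h(\b')$ is a common $Z$-prefix of $h(\vv_1)$ and $h(\vv_2)$, giving $h(\b)\wedge_Z h(\b')\leq_Z h(\vv_1)\wedge_Z h(\vv_2)$. For the reverse I would invoke Proposition \ref{prop:bifix}: the basis of $U^*$ is a bifix code, so the distinct basis elements $h(\b)$ and $h(\b')$ are not prefix comparable, and hence $p:=h(\b)\wedge_Z h(\b')$ is a proper $Z$-prefix of each. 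By maximality of $p$, the $Z$-factor of $h(\b)$ just after $p$ and that of $h(\b')$ just after $p$ differ; these are exactly the $Z$-factors of $h(\vv_1)$ and $h(\vv_2)$ just after $p$, so the two $Z$-factorisations already diverge at $p$ and $h(\vv_1)\wedge_Z h(\vv_2)=p$. Chaining the three steps yields $o=h(\b)\wedge_Z h(\b')$.

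I expect the delicate point to be the second step: correctly reading off "the next $Z$-factor after $o$'' from the identities $o\,g(\u_i)=h(\vv_i)$ and combining this with the $Z$-markedness of $g$, since this is exactly where the free-hull bookkeeping meets the marking hypothesis. The first and third steps are routine once every word in sight is regarded through its unique $Z$-factorisation, the third merely repackaging the bifix property of the basis of $U^*$.
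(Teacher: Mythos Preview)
Your proof is correct and uses the same ingredients as the paper's argument: the membership $o\in\FH Z$ from Remark~\ref{rm:blocksoverflows}, the bifix property of $U$ (Proposition~\ref{prop:bifix}), and the $Z$-markedness of $g$ to rule out $o$ being a proper $Z$-prefix. The only difference is in routing: the paper argues directly that $o$ is a prefix of $h(\b)$ and $h(\b')$ (using bifix to exclude $h(\b)\leq o$), then compares $o$ with $h(\b)\wedge_Z h(\b')$; you instead first establish $o=h(\vv_1)\wedge_Z h(\vv_2)$ and add a separate step identifying this with $h(\b)\wedge_Z h(\b')$. Your route is slightly longer but arguably more explicit about the $Z$-factorisation bookkeeping, while the paper's shortcut saves that third step at the price of a terser justification.
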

	\begin{proof}
		Note that $o$ is prefix comparable with both $h(\b)$ and $h(\b')$. If $h(\b) \leq o$ or $h(\b') \leq o$, then also $h(\b)$ and $h(\b')$ are prefix comparable, contradicting Proposition \ref{prop:bifix}. 
		Moreover $o \leq h(\b) \wedge_Z h(\b')$, by Remark \ref{rm:blocksoverflows}.

		Let $o<h(\b) \wedge_Z h(\b')$ and let $o =(h(\b) \wedge_Z h(\b'))o'$. Then $\first_Z(g(\a))=\first_Z(g(\a')) = \first_Z(o')$, a contradiction with $g$ being Z-marked. Then $o = h(\b) \wedge_Z h(\b')$. 
	\end{proof}{}

By Lemma \ref{lm:marked} and Lemma \ref{lm:criticaloverflow} we have the uniqueness of the critical overflow.

 We can now prove the main result of the paper.

 \begin{thm}
 Let $X=\{x, y, z\}^*$ , $U=\{u, v, w\}^*$ be different $3$-maximal submonoids of $\Sigma^*$. Then\\  
$X^* \cap U^*=\{\alpha, \beta\}^*$, for some $\alpha, \beta\in \Sigma^*$ \\or \\
$X^* \cap U^*=\{\alpha \gamma^* \beta\}^*$, for some $\alpha, \beta, \gamma \in \Sigma^+.$
 \noindent 
 \end{thm}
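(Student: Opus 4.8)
The reduction is already in place: $X^*\cap U^*$ is freely generated by the words $g(\r)$ as $(\r,\s)$ ranges over the minimal solutions of $C(g,h)$, so the whole task is to determine this set of minimal solutions. By Lemma~\ref{lm:marked} I may assume $g$ is $Z$-marked, and by Lemma~\ref{lm:notcompa} the $\r$-components and the $\s$-components of distinct minimal solutions each form an antichain. The plan is to organise everything around the \emph{number} of minimal solutions: if there are at most two, then the intersection is $\{\alpha,\beta\}^*$ and we land in the first case (allowing $\alpha$ or $\beta$ to be empty). The entire content of the theorem is thus to prove that three or more minimal solutions force the form $\{\alpha\gamma^*\beta\}^*$.

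First I would dispose of the case in which $h$ is also $Z$-marked. Then $G_Z$ has no horizontal edges, and since both morphisms are marked every $g$-node is joined by a vertical edge to at most one $h$-node and vice versa; hence the vertical edges of $G_Z$ form a matching and $c(Z)=|V_Z|-(\text{number of vertical edges})$. As $|V_Z|\le 6$ and $c(Z)=r_f(Z)\ge 4$ by Lemma~\ref{lm:freegraphZ}, there are at most two vertical edges. Each minimal solution $(\r,\s)$ contributes the vertical edge $[g(\first(\r)),h(\first(\s))]$, and by Proposition~\ref{prop:cases} (empty overflow) distinct minimal solutions diverge at the first letter on both sides, so they contribute pairwise vertex-disjoint edges. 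Hence there are at most two minimal solutions and the intersection is $\{\alpha,\beta\}^*$. Consequently, three or more minimal solutions force $h$ to be not $Z$-marked.

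So assume $h$ is not $Z$-marked; by Lemma~\ref{lm:marked} exactly two letters, say $\a$ and $\b$, satisfy $\first_Z(h(\a))=\first_Z(h(\b))$, and by Lemma~\ref{lm:criticaloverflow} together with the uniqueness of the critical overflow the word $o=h(\a)\wedge_Z h(\b)\neq\epsilon$ is the unique critical overflow. The key is a determinism argument driven by the marking. Building a common value from left to right, I track the current overflow. Because $g$ is $Z$-marked, whenever $h$ is ahead the next letter of $\r$ is forced; whenever $g$ is ahead by a block $p$, the next letter of $\s$ is the unique letter with $\first_Z=\first_Z(p)$, unless $\first_Z(p)=\first_Z(h(\a))$, in which case the only options are $\a$ and $\b$. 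Using Remark~\ref{rm:blocksoverflows}, bifixity (Proposition~\ref{prop:bifix}) and the uniqueness of the critical overflow, I would show that $o$ is the only overflow at which \emph{both} continuations lead to a completed solution; at every other overflow exactly one continuation survives, so the construction is deterministic away from $o$. Thus all minimal solutions share one deterministic initial segment reaching $o$ (giving $\alpha$); from $o$ one branch returns deterministically to $o$, a loop (giving $\gamma$), and the other exits deterministically to the synchronised state (giving $\beta$). That exactly one branch loops and one exits is forced by having three or more solutions: two exits would yield only two solutions, and two loops would allow no solution to terminate. Hence the minimal solutions are infinite, indexed by the number $i\ge 0$ of loop iterations, with generators exactly $\alpha\gamma^i\beta$ where $\alpha,\beta,\gamma\in\Sigma^+$; that is, $X^*\cap U^*=\{\alpha\gamma^*\beta\}^*$.

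The hard part is the middle step of the last paragraph: making this ``overflow automaton'' rigorous. Concretely I must show that the reachable overflows form a finite set (a length bound on overflows, via Lemma~\ref{lm:incmorphism} and freeness of $\FH Z$); that among the configurations which can still be completed, $o$ is the only genuine branch point — this is exactly where uniqueness of the critical overflow and the marking of $g$ eliminate the spurious branches at overflows $p$ with $p<_Z o$ or $h(\a)\le_Z p$; and that the initial segment leading to $o$ is common to all solutions, which follows from the free-graph constraints (in the forest case of Remark~\ref{rm:threecases} there is a unique starting vertical edge, and in the triangular case the $g$-side must run ahead alone up to $o$). Once these deterministic facts are established, the loop-and-exit description, and with it the dichotomy between $\{\alpha,\beta\}^*$ and $\{\alpha\gamma^*\beta\}^*$, follows.
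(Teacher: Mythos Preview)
Your overall plan matches the paper's: reduce to minimal solutions, dispose of the case of at most two, and for three or more exploit the uniqueness of the nonempty critical overflow $o$ to force a loop structure. Your treatment of the both-marked case via the matching bound in $G_Z$ is clean and is essentially the opening case analysis of the paper's proof.

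The divergence is in the execution of the main case. You package it as an ``overflow automaton'' with a single genuine branch point at $o$, and then argue abstractly that one outgoing branch must loop and the other must exit. The paper instead works concretely with three minimal solutions $(\r_1,\s_1),(\r_2,\s_2),(\r_3,\s_3)$: it sets $\u=\r_1\wedge\r_2$, $\u'=\r_2\wedge\r_3$ (similarly $\vv,\vv'$), uses uniqueness of $o$ to get $g(\u)=h(\vv)o$ and $g(\u')=h(\vv')o$, shows $\u<\u'$ without loss of generality, writes $\u'=\u\p$, $\vv'=\vv\q$, and reads off the key commutation
\[
o\,g(\p)=h(\q)\,o
\]
directly. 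After normalising the three solutions so that no strictly smaller pair reaches $o$ inside the relevant windows, a short induction shows that every minimal solution is $(\u\p^i\u_1,\vv\q^i\vv_1)$, and conversely that each such pair is minimal; this gives $\alpha=g(\u)$, $\gamma=g(\p)$, $\beta=g(\u_1)$.

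Your sketch is not wrong in spirit, but the paragraph you yourself flag as ``the hard part'' is the entire content of the theorem, and the automaton language makes it harder rather than easier to finish. Three points in particular are not yet handled: (i) the claim that the initial segment to $o$ is deterministic needs care, since in the triangular case of Remark~\ref{rm:threecases} the $\s$-side branches at its very first letter while the $\r$-side runs ahead; (ii) ``one branch returns to $o$'' is exactly the statement $o\,g(\p)=h(\q)\,o$ for some $\p,\q$, which you never write down and which does not follow from finiteness of reachable overflows alone (a priori a branch could enter a cycle disjoint from $o$); (iii) ruling out such spurious cycles requires precisely the argument the paper gives via Proposition~\ref{prop:cases} and Lemma~\ref{lm:criticaloverflow} applied to a third solution against each of the first two. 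If you want to complete your route, the single missing computation is the commutation $o\,g(\p)=h(\q)\,o$ obtained by comparing two pairs of solutions; once you have it, your loop/exit dichotomy becomes the paper's induction and the finiteness-of-overflows step is unnecessary.
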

 \begin{proof}
Let $(\r_1,\s_1)$, $(\r_2, \s_2)$ and $(\r_3, \s_3)$ be three minimal solutions. If two of them, say $(\r_1, \s_1)$ and $(\r_2, \s_2)$ are such that $\r_1 \wedge \r_2=\epsilon$ then, by Remark \ref{rm:threecases} case \ref{case:emptyov}, $G_Z$ has two vertical edges and cannot have others edges. Hence, if $\r_1 \wedge \r_3= \epsilon$ and $\r_2 \wedge \r_3= \epsilon$ then, by Remark \ref{rm:threecases} case \ref{case:emptyov}, $G_Z$ must have another vertical edge hence we have a contradiction. If $\r_1 \wedge \r_3= \epsilon$ and $\r_2 \wedge \r_3 \neq \epsilon$ (resp. $\r_1 \wedge \r_3 \neq \epsilon$ and $\r_2 \wedge \r_3 = \epsilon$), then, by Remark \ref{rm:threecases} case \ref{case:overflow}, there is a nonempty critical overflow i.e. $h$ is not Z-marked and an horizontal edge exists, again a contradiction.  

It follows that $\r_1 \wedge \r_2 \wedge \r_3 \neq \epsilon$. Let  
$$\u=\r_1 \wedge \r_2 \,\,\,\mbox{and} \,\,\, \vv=\s_1 \wedge \s_2$$
and
$$\u'=\r_2 \wedge \r_3 \,\,\,\mbox{and} \,\,\, \vv'=\s_2 \wedge \s_3.$$ 

By Proposition \ref{prop:cases}, we have
\begin{equation}\label{eq:overflow}
 g(\u)=h(\vv)o\,\, \mbox{and} \,\, g(\u')=h(\vv')o.
\end{equation}
where $o$ is the (unique) critical overflow.

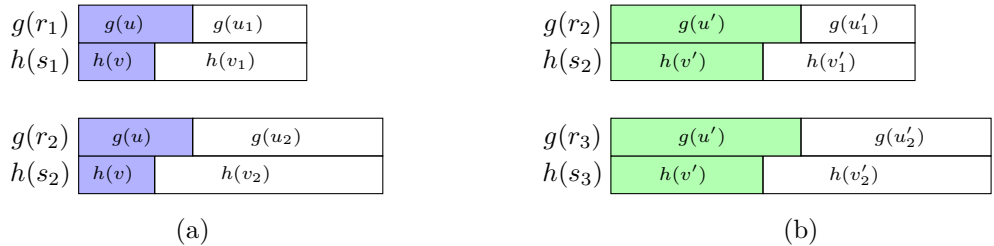
\begin{figure}[ht]
	   \centering
  \begin{tikzpicture}
    \node at (0,0) {$h(s_2)$};
  	\draw[fill = blue!30] (.5,-.25) rectangle (1.5,.25);
  	\node at (0.95,0) {${\scriptstyle h(v)}$};
  	\draw (1.5,-.25) rectangle (4.5,.25);
  	\node at (2.7,0) {${\scriptstyle h(v_2)}$};
    \node at (0,.5) {$g(r_2)$};
  	\draw[fill = blue!30] (.5,.25) rectangle (2,.75);
  	\node at (1.2,.5) {${\scriptstyle g(u)}$};
  	\draw (2,.25) rectangle (4.5,.75);
  	\node at (3.1,.5) {${\scriptstyle g(u_2)}$};
\begin{scope}[shift = {(7,1.5)}]
       \node at (0,0) {$h(s_2)$};
   \draw[fill = green!30] (.5,-.25) rectangle (2.5,.25);
   \node at (1.45,0) {${\scriptstyle h(v')}$};
   \draw (2.5,-.25) rectangle (4.5,.25);
   \node at (3.4,0) {${\scriptstyle h(v_1')}$};
   \node at (0,.5) {$g(r_2)$};
   \draw[fill = green!30] (.5,.25) rectangle (3,.75);
   \node at (1.7,.5) {${\scriptstyle g(u')}$};
   \draw (3,.25) rectangle (4.5,.75);
   \node at (3.7,.5) {${\scriptstyle g(u_1')}$};
\end{scope}
\begin{scope}[shift = {(0,1.5)}]
    \node at (0,0) {$h(s_1)$};
\draw[fill = blue!30] (.5,-.25) rectangle (1.5,.25);
\node at (0.95,0) {${\scriptstyle h(v)}$};
\draw (1.5,-.25) rectangle (3.5,.25);
\node at (2.5,0) {${\scriptstyle h(v_1)}$};
\node at (0,.5) {$g(r_1)$};
\draw[fill = blue!30] (.5,.25) rectangle (2,.75);
\node at (1.1,.5) {${\scriptstyle g(u)}$};
\draw (2,.25) rectangle (3.5,.75);
\node at (2.6,.5) {${\scriptstyle g(u_1)}$};
\end{scope}
\begin{scope}[shift = {(7,0)}]
\node at (0,0) {$h(s_3)$};
\draw[fill = green!30] (.5,-.25) rectangle (2.5,.25);
\node at (1.45,0) {${\scriptstyle h(v')}$};
\draw (2.5,-.25) rectangle (5.5,.25);
\node at (3.7,0) {${\scriptstyle h(v_2')}$};
\node at (0,.5) {$g(r_3)$};
\draw[fill = green!30] (.5,.25) rectangle (3,.75);
\node at (1.7,.5) {${\scriptstyle g(u')}$};
\draw (3,.25) rectangle (5.5,.75);
\node at (4.3,.5) {${\scriptstyle g(u_2')}$};
\end{scope}
\node at (2,-.75) {(a)};
\node at (10,-.75) {(b)};
\end{tikzpicture}
    \caption{The representation of three minimal solutions.}\label{fig:r1r2r3}
\end{figure}

Moreover, by Lemma \ref{lm:notcompa} there exist $\u_1 \neq \u_2 \neq \epsilon, \vv_1 \neq \vv_2 \neq \epsilon$ with 
\begin{align*}
\a_1&=\first(\u_1) \neq \first (\u_2)=\a_2, & \b_1&=\first(\vv_1) \neq \first (\vv_2)=\b_2 
\end{align*}
such that $$(\r_1, \s_1)=(\u\u_1,\vv\vv_1),\,\, (\r_2, \s_2)=(\u\u_2,\vv\vv_2)$$ 
(cf. Figure \ref{fig:r1r2r3}(a)),  
and there exist $\u_1' \neq \u_2'\neq \epsilon$,  $\vv_1'\neq  \vv_2'\neq \epsilon$  with
\begin{align*}
 \a'_1 &=\first(\u_1') \neq \first (\u_2')=\a_2', &  \b_1'&=\first(\vv_1') \neq \first (\vv_2')=\b_2'
 \end{align*}
 such that $$(\r_2, \s_2)=(\u'\u_1',\vv'\vv_1'),\,\, (\r_3, \s_3)=(\u'\u_2',\vv'\vv_2')$$ 
(cf. Figure \ref{fig:r1r2r3}(b)). 

First, we prove that $\u \neq \u'$. Indeed, if $\u=\u'$ then, by (\ref{eq:overflow}), we have  $\vv=\vv'$, with $\b_1 \neq   \b_2 \neq \b_2'$. By Lemma \ref{lm:criticaloverflow}, we have $o=h(\b_1) \wedge_Z h(\b_2)=h(\b_2) \wedge_Z h(\b_2')$, i.e., we have three different elements of $h(A)$ having a nonempty common prefix. Then $G_Z$ has two distinct horizontal edges, a contradiction.

Suppose (without loss of generality) that $\u < \u'$. From (\ref{eq:overflow}) and Lemma \ref{lm:incmorphism}, it follows that $\vv < \vv'$. Let $\u'=\u\p$, and $\vv'=\vv\q$. Then $\p,\q \neq \epsilon$ and $og(\p)=h(\q)o$.

 Canceling, if necessary, superfluous factors $(\w,\w')$ satisfying $og(\w)=h(\w')o$, one can choose $(\r_1,\s_1)$, $(\r_2, \s_2)$ and $(\r_3, \s_3)$ with the following properties:
\begin{enumerate}\label{pos}
	\item \label{pos:uv} $(\u, \vv)$ is such that for any $(\overline{\u}, \overline{\vv})<(\u, \vv)$, $g(\overline{\u}) \neq h(\overline{\vv})o$. 
	\item \label{pos:upvq} $(\u', \vv')$ is such that for any $(\u, \vv) < (\overline{\u}, \overline{\vv})<(\u\p, \vv\q)$, $g(\overline{\u}) \neq h(\overline{\vv})o$. 
	\item \label{pos:u1v1} $(\u_1, \vv_1)$ is such that for any $(\u, \vv) < (\overline{\u}, \overline{\vv})<(\u\u_1, \vv\vv_1)$, $g(\overline{\u}) \neq h(\overline{\vv})o$. 
\end{enumerate} 

We prove that the set $\{(\u\p^i\u_1, \vv\q^i\vv_1)\mid i \geq 0\}$ is the set of all the minimal solutions, i.e., we prove that a pair $(\r, \s)$ is a minimal solution iff $(\r, \s)=(\u\p^i\u_1, \vv\q^i\vv_1)$ for a certain $i \geq 0$. 

If  $(\r, \s) \neq (\u\u_1, \vv\vv_1)$ and $(\r, \s) \neq (\u\p\u_1, \vv\q\vv_1)$ we have that $\r \wedge \u\u_1 \neq \epsilon$ and $\r \wedge \u\p\u_1 \neq \epsilon$. Indeed we saw that for any three minimal solutions the first components must have a nonempty common prefix.

\begin{figure}[ht]
	   \centering
\begin{tikzpicture}
\draw[draw = yellow!80, fill = yellow!80] (1.5,-.25) rectangle (2,.25);
\node at (1.75,0) {${\scriptstyle o}$};
\draw (.5,-.25) rectangle (1.5,.25);
\node at (0.95,0) {${\scriptstyle h(v)}$};
\draw (1.5,-.25) rectangle (4.5,.25);
\node at (2.7,0) {${\scriptstyle h(v_1)}$};
\draw(.5,.25) rectangle (2,.75);
\node at (1.2,.5) {${\scriptstyle g(u)}$};
\draw (2,.25) rectangle (4.5,.75);
\node at (3.1,.5) {${\scriptstyle g(u_1)}$};
\begin{scope}[shift = {(0,-1.5)}]
\begin{scope}[shift = {(1,0)}]
\draw[draw = yellow!80, fill = yellow!80] (1.5,-.25) rectangle (2,.25);
\node at (1.75,0) {${\scriptstyle o}$};
\draw (1.5,-.25) rectangle (4.5,.25);
\node at (2.7,0) {${\scriptstyle h(v_1)}$};
\draw (2,.25) rectangle (4.5,.75);
\node at (3.1,.5) {${\scriptstyle g(u_1)}$};
\end{scope}
\draw[draw = yellow!80, fill = yellow!80] (1.5,-.25) rectangle (2,.25);
\draw (1.5,-.25) rectangle (2.5,.25);
\node at (2.1,0) {${\scriptstyle h(q)}$};
\draw (2,.25) rectangle (3,.75);
\node at (2.5,.5) {${\scriptstyle g(p)}$};
\draw (.5,-.25) rectangle (1.5,.25);
\node at (0.95,0) {${\scriptstyle h(v)}$};
\draw(.5,.25) rectangle (2,.75);
\node at (1.2,.5) {${\scriptstyle g(u)}$};
\end{scope}
\begin{scope}[shift = {(0,-3)}]
\begin{scope}[shift = {(2,0)}]
\draw[draw = yellow!80, fill = yellow!80] (1.5,-.25) rectangle (2,.25);
\node at (1.75,0) {${\scriptstyle o}$};
\draw (1.5,-.25) rectangle (4.5,.25);
\node at (2.7,0) {${\scriptstyle h(v_1)}$};
\draw (2,.25) rectangle (4.5,.75);
\node at (3.1,.5) {${\scriptstyle g(u_1)}$};
\end{scope}
\draw[draw = yellow!80, fill = yellow!80] (1.5,-.25) rectangle (2,.25);
\draw (1.5,-.25) rectangle (2.5,.25);
\node at (2.1,0) {${\scriptstyle h(q)}$};
\draw (2,.25) rectangle (3,.75);
\node at (2.5,.5) {${\scriptstyle g(p)}$};
\begin{scope}[shift={(1,0)}]
\draw[draw = yellow!80, fill = yellow!80] (1.5,-.25) rectangle (2,.25);
\draw (1.5,-.25) rectangle (2.5,.25);
\node at (2.1,0) {${\scriptstyle h(q)}$};
\draw (2,.25) rectangle (3,.75);
\node at (2.5,.5) {${\scriptstyle g(p)}$};
\end{scope}
\draw (.5,-.25) rectangle (1.5,.25);
\node at (0.95,0) {${\scriptstyle h(v)}$};
\draw(.5,.25) rectangle (2,.75);
\node at (1.2,.5) {${\scriptstyle g(u)}$};
\end{scope}
\end{tikzpicture}
    \caption{Three minimal solutions $(\u\p^i\u_1, \vv\q^i\vv_1)$, with $i=0,1,2$.}\label{fig:cicle}
\end{figure}{}

 If $\r \wedge \u\u_1= \overline{\u} < \u$, then, by Proposition \ref{prop:cases} and Lemma \ref{lm:criticaloverflow}, we have $g(\overline{\u})=h(\overline{\vv})o$, where $\overline{\vv}=\s \wedge \vv\vv_1$. By \eqref{eq:overflow}, we have $\overline{\vv} < \vv$ which is a contradiction with the assumption \ref{pos:uv}.

 If $\r \wedge \u\p\u_1=\u\p_1$, with $\p_1\ < \p$, let $\overline{\vv}=\s \wedge \vv\q\vv_1$, then by Proposition \ref{prop:cases} and Lemma \ref{lm:criticaloverflow}, we have $g(\overline{\u})=h(\overline{\vv})o$.  By \eqref{eq:overflow} $\overline{\vv}=\vv\q_1$, with $\q_1 < \q$, against assumption \ref{pos:upvq}.

 If there exists $k \in \{0,1\}$ such that $\overline{\u}=\r \wedge \u\p^k\u_1=\u\p^k\t_1$, with $\t_1< \u_1$, then, by Proposition \ref{prop:cases} and Lemma \ref{lm:criticaloverflow}, we have $g(\overline{\u})=h(\overline{\vv})o$ where    $\overline{\vv}=\vv\q^k\w_1$ with $\w_1 < \vv_1$ from (\ref{eq:overflow}). Since $g(\u\p^k\t_1)=h(\vv\q^k\w_1)o$ it follows that $g(\u\t_1)=h(\vv\w_1)o$ against assumption \ref{pos:u1v1}. 

We can conclude that $(\r,\s)=(\u\p\t, \vv\q\w)$, where $\t, \w \neq \epsilon$, and $(\u\t, \vv\w)$ is a minimal solution.
By induction, it follows that $(\t,\w)=(\p^{i}\u_1,\q^{i}\vv_1)$, for some $i>0$.

Finally, $(\u\p^{i}\u_1,\vv\q^{i}\vv_1)$, is minimal for each $i \geq 0$. Indeed, if for some $i$ there exists a minimal solution $(\r,\s)<(\u\p^{i}\u_1,\vv\q^{i}\vv_1)$, as we have seen, $(\r,\s)=(\u\p^{j}\u_1,\vv\q^{j}\vv_1)$ for some $j\geq 0$. It follows that either $\u_1 < \p$ or $\p < \u_1$. In the first case we have the contradicion $\u\u_1 < \u\p\u_1$, i.e. $\r_1$ and $\r_2$ are prefix comparable, in the second case we contradict $\u= \r_1 \wedge \r_2$.  
The thesis follows with $\alpha=g(\u), \gamma=g(\p)$ and $\beta=g(\u_1)$. 
\end{proof}

\section{Conclusions}
The hypothesis of $k$-maximality considerably simplifies the structure of the intersection of monoids and gives an interesting connection with the generation of binary equality set, in the case $k=3$. Our proof also shows the importance of the free graph in this context. Together with combinatorial properties of $k$-maximal monoids investigated in \cite{Castiglione_2019}, this is promising for further investigation of cases with arbitrary $k$.

\bibliographystyle{plain}
\bibliography{ref}

\end{document}